\DeclareMathAlphabet{\mathpzc}{OT1}{pzc}{m}{it}
\newtheorem{definition}{Definition}
\newtheorem{te}{Theorem}
\newtheorem{prop}{Proposition}
\newcommand{\mvtoc}{\mu}
\newcommand{\mctov}{\eta}
\newcommand{\pvtoc}{q}
\newcommand{\pctov}{r}
\newcommand{\Phiv}{\Phi^{(v)}}
\newcommand{\Phic}{\Phi^{(c)}}
\newcommand{\Phia}{\Phi^{(a)}}
\newcommand{\fig}[4]{ \begin{figure}[#4]
  \centering
   \includegraphics[width=#3\textwidth]{figures/#1}
   \caption{#2}\label{fig:#1}
  \end{figure}
}
\begin{document}
\title{Analysis and Design of Finite Alphabet Iterative Decoders Robust to Faulty Hardware}
\author{Elsa Dupraz, David Declercq, Bane Vasi\'{c} and Valentin Savin
\thanks{This work was funded by the Seventh Framework Programme of the European Union, under Grant Agreement number 309129 (i-RISC project), and by the NSF under grants CCF-0963726 and CCF-1314147.}
\thanks{E. Dupraz and D. Declercq are with the ETIS lab, ENSEA/Universit\'{e} de Cergy-Pontoise/CNRS UMR 8051, 95014 Cergy-Pontoise, France (e-mail:elsa.dupraz@ensea.fr; declercq@ensea.fr).}
\thanks{B. Vasi$\acute{\mathrm{c}}$ is with the Department
of Electrical and Computer Engineering, University of Arizona, Tucson, AZ, 85721 USA (e-mail: vasic@ece.arizona.edu).}
\thanks{V. Savin is with the CEA-LETI, Minatec Campus, 38000 Grenoble, France (e-mail:valentin.savin@cea.fr)}
\markboth{Submitted to IEEE Transactions on Communications, October 2014.}{Dupraz \MakeLowercase{\textit{et al.}}: Analysis and Design of Finite Alphabet Iterative Decoders Robust to Faulty Hardware}}

\maketitle

\begin{abstract}
This paper addresses the problem of designing LDPC decoders robust to transient errors introduced by a faulty hardware.
We assume that the faulty hardware introduces errors during the message passing updates and we propose a general framework for the definition of the message update faulty functions.
Within this framework, we define symmetry conditions for the faulty functions, and derive two simple error models used in the analysis.
With this analysis, we propose a new interpretation of the functional Density Evolution threshold introduced in~\cite{Kameni14Coms,ngassa14ITA}, and show its limitations in case of highly unreliable hardware. 
However, we show that under restricted decoder noise conditions, the functional threshold can be used to predict the convergence behavior of FAIDs under faulty hardware.
In particular, we reveal the existence of robust and non-robust FAIDs and propose a framework for the design of robust decoders.
We finally illustrate robust and non-robust decoders behaviors of finite length codes using Monte Carlo simulations.
\end{abstract}

\section{Introduction}\label{sec:intro}
Reliability is becoming a major issue in the design of modern electronic devices.
The huge increase in integration factors coupled with the important reduction of the chip sizes makes the devices much more sensitive to noise and may induce transient errors.
Furthermore, the fabrication process makes hardware components more prone to defects and may also cause permanent computation errors.
As a consequence, in the context of communication and storage, errors may not only come from transmission channels, but also from the faulty hardware used in transmitters and receivers.

The general problem of reliable function computation using faulty gates was first addressed by von Neumann in~\cite{von1956probabilistic} and the notion of redundancy was later considered in~\cite{gacs1994lower,dobrushin1977upper,pippenger1985networks}.
Hardware redundancy is defined as the number of noisy gates required for reliable function computation divided by the number of noiseless gates needed for the same function computation.
G\'{a}cs and G\'{a}l~\cite{gacs1994lower} and Dobrushin and Ortyukov~\cite{dobrushin1977upper}, respectively, provided lower and upper bounds on the hardware redundancy for reliable Boolean function computation from faulty gates.
Pippenger~\cite{pippenger1985networks} showed that finite asymptotic redundancy can be achieved when using Low Density Parity Check (LDPC) codes for the reliable computation of linear Boolean functions.
Taylor~\cite{taylor1968reliablestorage} and Kuznetsov~\cite{kuznetsov1973information} considered memories as a particular instance of this problem
and provided an analysis of a memory architecture based on LDPC decoders made of faulty components.
More recently, an equivalence between the architecture proposed by Taylor and a noisy Gallager-B decoder was identified by Vasic \emph{et al.}~\cite{vasic2007information}, while Chilappagari \emph{et al.}~\cite{chilappagari2006analysis} analyzed a memory architecture based on one-step majority logic decoders.

As a consequence, there is a need to address the problem of constructing reliable LDPC decoders made of faulty components not only for error correction on faulty hardware, but also as a first step in the context of reliable function computation and storage.
Formulating a general method for construction of robust decoders requires understanding whether a particular decoder is inherently robust to errors introduced by the faulty hardware.
There is also a need for a rigorous analysis to determine which characteristics of decoders make them robust.

To answer to the first point, Varshney~\cite{varshney2011performance} introduced a framework referred to as noisy Density Evolution (noisy-DE) for the performance analysis of noisy LDPC decoders in terms of asymptotic error probability.
Based on this framework, the asymptotic performance of a variety of noisy LDPC decoders was analyzed.
In~\cite{varshney2011performance}, infinite precision BP decoders were investigated, which is not useful for actual implementation on faulty hardware.
On the contrary, noisy practically important hard-decision decoders, such as noisy Gallager-A~\cite{varshney2011performance} and Gallager-E~\cite{huang2013analysis} decoders were considered.
Gallager-B decoders were analyzed for binary~\cite{vasic2007information,leduc2012faulty,huang2013gallager} and non-binary~\cite{yazdi2013gallager} alphabets under transient error models, and ~\cite{huang2013gallager} also considered permanent error models.
From the same noisy-DE framework,~\cite{Balatsoukas14ComL,ngassa2013min} proposed an asymptotic analysis of the behavior of stronger discrete Min-Sum decoders, for which the exchanged messages are no longer binary but are quantized soft information represented by a finite (and typically small) number of bits.

Recently, a new class of LDPC decoders referred to as Finite Alphabet Iterative Decoders (FAIDs) has been introduced~\cite{Planjery_IEEETransCommun_2013}. 
In these decoders, the messages take their values in small alphabets and the variable node update is derived through a predefined Boolean function.
The FAID framework offers the possibility to define a large collection of these functions, each corresponding to a particular decoding algorithm.
The FAIDs were originally introduced to address the error floor problem, and designed to correct error events located on specific small topologies of error events referred to as {\it trapping sets} that usual decoders (Min-Sum, BP-based) cannot correct.
When operating on faulty hardware, the FAIDs may potentially exhibit very different properties in terms of tolerance to transient errors and we are interested in identifying the robust ones among the large diversity of decoders.

In this paper, we propose a rigorous method for the analysis and the design of decoding rules robust to transient errors introduced by the hardware.
We assume that the faulty hardware introduces transient errors during function computation and propose a general description of faulty functions.
We introduce new symmetry conditions for faulty functions that are more general than those in~\cite{varshney2011performance}. 
We discuss possible simplifications of the general description and present two particular error models to represent the faulty hardware effect.
The design procedure we propose is based on an asymptotic performance analysis of noisy-FAIDs using noisy-DE.
In order to characterize the asymptotic behavior of the FAIDs from the noisy-DE equations, we follow the definition of the noisy-DE threshold of~\cite{Kameni14Coms,ngassa14ITA}, referred to as the \emph{functional threshold}.
We analyze more precisely the behavior of the functional threshold and we observe that if the decoder noise level is too high, the functional threshold fails at predicting the convergence behavior of the faulty decoder. 
However, under the restricted decoder noise conditions, we show that the functional threshold can be used to predict the behavior of noisy-FAIDs and gives a criterion for the comparison of the asymptotic performance of the decoders.
Based on this criterion, we then propose a noisy-DE based framework for the design of decoders inherently robust to errors introduced by the hardware.
Finite-length simulations illustrate the gain in performance at considering robust FAIDs on faulty hardware.

The outline of the paper is as follows.
Section~\ref{sec:FaultyFAID} gives the notations and basic decoder definition.
Section~\ref{sec:models} introduces a general description of faulty functions and presents particular error models.
Section~\ref{sec:FaultyDE} gives the noisy-DE analysis for particular decoder noise models.
Section~\ref{sec:th} restates the definition of the functional threshold and presents the analysis of its behavior.
Section~\ref{sec:selection} presents the method for the design of robust decoders.
Section~\ref{sec:results} gives the finite-length simulation results, and Section~\ref{sec:conclusion} provides the conclusions.

\section{Notations and Decoders Definition}\label{sec:FaultyFAID}
This section introduces notations and basic definitions of FAIDs introduced in~\cite{Planjery_IEEETransCommun_2013}.
In the following, we assume that the transmission channel is a Binary Symmetric Channel (BSC) with parameter $\alpha$.
We consider a BSC because on the hardware all the operations are performed at a binary level.

An $N_s$-level FAID is defined as a 5-tuple given by $\mathrm{D}=(\mathcal{M},\mathcal{Y},\Phiv,\Phic,\Phia)$.
The message alphabet is finite and can be defined as $\mathcal{M} = \{-L_s,\ldots,-L_1,0,L_1,\ldots,L_s\}$, where $L_i\in\mathbb{R^{+}}$
and $L_i>L_j$ for any $i>j$. It thus consists of $N_s=2s+1$ levels to which the message values belong.
For the BSC, the set $\mathcal{Y}$, which denotes the set of possible channel values, is defined as $\mathcal{Y}=\{\pm \mathrm{B}\}$.
The channel value $y\in \cal{Y}$ corresponding to Variable Node (VN) $v$ is determined based on its
received value. Here, we use the mapping $0\rightarrow +\mathrm{B}$ and $1\rightarrow-\mathrm{B}$.
In the following, $\mvtoc_1, \dots, \mvtoc_{d_c-1}$ denote the values of incoming messages to a Check Node (CN) of degree $d_c$ and let $\mctov_1, \dots, \mctov_{d_v-1}$ be the values of incoming messages to a VN of degree $d_v$.
Denote $\boldsymbol{\mvtoc} = [\mu_1,\dots, \mu_{d_c-1}]$ and $\boldsymbol{\mctov} = [\mctov_1,\dots,\mctov_{d_v-1}]$ the vector representations of the incoming messages to a CN and to a VN, respectively.
FAIDs are iterative decoders and as a consequence, messages $\boldsymbol{\mvtoc}$ and $\boldsymbol{\mctov}$ are computed at each iteration.
However, for simplicity, the current iteration is not specified in the notations of the messages.

At each iteration of the iterative decoding process, the following operations are performed on the messages.
The Check Node Update (CNU) function $\Phic: \mathcal{M}^{d_c-1} \to \mathcal{M} $ is used for the message update at a CN of degree $d_c$.
The corresponding outgoing message is computed as
\begin{equation}\label{eq:Phi_c}
 \mctov_{d_c} = \Phic(\boldsymbol{\mvtoc}) .
\end{equation}
In~\cite{Planjery_IEEETransCommun_2013}, $\Phic$ corresponds to the CNU of the standard Min-Sum decoder.
The Variable Node Update (VNU) function $\Phiv: \mathcal{M}^{d_v-1} \times \mathcal{Y} \to \mathcal{M}$
is used for the update at a VN of degree $d_v$.
The corresponding outgoing message is computed as 
\begin{equation}\label{eq:Phi_v}
 \mvtoc_{d_v} = \Phiv(\boldsymbol{\mctov},y) .
\end{equation}
The properties that $\Phi_v$ must satisfy are given in~\cite{Planjery_IEEETransCommun_2013}. 
At the end of each decoding iteration, the \emph{A Posteriori} Probability (APP) computation produces messages $\gamma$ calculated from the function $\Phia :\mathcal{M}^{d_v} \times \mathcal{Y} \to \mathcal{\bar{M}}$, where $\bar{\mathcal{M}} = \{ -L_{s'}, \dots, L_{s'}\}$ is a discrete alphabet of $N_{s'} = 2s'+1$ levels.
Denote $\boldsymbol{\mctov^{\star}} = [\mctov_1,\dots,\mctov_{d_v}]$ the vector representation of all the messages incoming to a VN. 
The APP computation produces
\begin{equation}\label{eq:Phi_app}
 \gamma = \Phia(\boldsymbol{\mctov^{\star}},y) .
\end{equation}
The APP is usually computed on a larger alphabet $\mathcal{\bar{M}}$ in order to limit the impact of saturation effects when calculating the APP.
The mapping $\Phia$ is given by
\begin{equation}
 \Phia(\tilde{\boldsymbol{\mctov}}^{\star},y)  =  \sum \boldsymbol{\tilde{\boldsymbol{\mctov}}^{\star}}  + y ~~ .
\end{equation}
The hard-decision bit corresponding to each variable node $v_n$ is given by the sign of the APP.
If $\Phia(\tilde{\boldsymbol{\mctov}}^{\star},y) = 0$, then the hard-decision bit is selected at random and takes value $0$ with probability $1/2$.

Alternatively, $\Phiv$ can be represented as a Look-Up Table (LUT). 
For instance, Table \ref{tab:mapping} shows an example of LUT for a 7-level FAID and column-weight three codes when the channel value is $-\mathrm{B}$. 
The corresponding LUT for the value $+\mathrm{B}$ can be deduced by symmetry.
Classical decoders such as the standard Min-Sum and the offset Min-Sum can also be seen as instances of FAIDs.
It indeed suffices to derive the specific LUT from the VNU functions of these decoders. 
Table~\ref{tab:3biMinSum} gives the VNU of the $7$-level offset Min-Sum decoder.
Therefore, the VNU formulation enables to define a large collection of decoders with common characteristics but potentially different robustness to noise.
In the following, after introducing error models for the faulty hardware, we describe a method for analyzing the asymptotic behavior of noisy-FAIDs. 
This method enables us to compare decoder robustness for different mappings $\Phiv$ and thus to design decoders robust to faulty hardware.

\begin{table*}[t]
\parbox{.48\linewidth}{%
\caption{LUT $\Phiv_{ \text{opt}}$ reported in \cite{Planjery_IEEETransCommun_2013} optimized for the error floor}
\centering 
\resizebox{8cm}{!}{
\begin{tabular}{|c||c|c|c|c|c|c|c|}
\hline
\boldmath $m_{1}/m_{2}$    & \boldmath$-L_3$ &\boldmath$-L_2$ & \boldmath$-L_1$  & \boldmath $0$  & \boldmath$+L_1$  & \boldmath $+L_2$ & \boldmath $+L_3$\\ \hline\hline

\boldmath $-L_3$ & $-L_3$ & $-L_3$ & $-L_3$ & $-L_3$ & $-L_3$ & $-L_3$ & $-L_1$ \\  \hline
\boldmath $-L_2$ & $-L_3$ & $-L_3$ & $-L_3$ & $-L_3$ & $-L_2$ & $-L_1$ & $L_1$ \\  \hline
\boldmath $-L_1$ & $-L_3$ & $-L_3$ & $-L_2$ & $-L_2$ & $-L_1$ & $-L_1$ & $L_1$ \\  \hline
\boldmath 0 & $-L_3$ & $-L_3$ & $-L_2$ & $-L_1$ & $0$ & $0$ & $L_1$ \\  \hline
\boldmath $L_1$ & $-L_3$ & $-L_2$ & $-L_1$ & $0$ & $0$ & $L_1$ & $L_2$ \\  \hline
\boldmath $L_2$ & $-L_3$ & $-L_1$ & $-L_1$ & $0$ & $L_1$ & $L_1$ & $L_3$ \\  \hline
\boldmath $L_3$ & $-L_1$ & $L_1$ & $L_1$ & $L_1$ & $L_2$ & $L_3$ & $L_3$ \\ \hline
 \end{tabular}
}
\label{tab:mapping}}%
\hfill
\parbox{.48\linewidth}{%
\caption{VNU of a 3-bit offset Min-Sum represented as a FAID}
\centering \resizebox{8cm}{!}{
\begin{tabular}{|c||c|c|c|c|c|c|c|}
\hline
\boldmath $m_{1}/m_{2}$ & \boldmath$-L_3$ &\boldmath$-L_2$ & \boldmath$-L_1$  & \boldmath $0$  & \boldmath$+L_1$  & \boldmath $+L_2$ & \boldmath $+L_3$\\ \hline\hline
\boldmath $-L_3$ & $-L_3$ & $-L_3$ & $-L_3$ & $-L_3$ & $-L_3$ & $-L_2$ & $-L_1$ \\  \hline
\boldmath $-L_2$ & $-L_3$ & $-L_3$ & $-L_3$ & $-L_3$ & $-L_2$ & $-L_1$ & 0 \\  \hline
\boldmath $-L_1$ & $-L_3$ & $-L_3$ & $-L_3$ & $-L_2$ & $-L_1$ & 0 & 0 \\  \hline
\boldmath 0 & $-L_3$ & $-L_3$ & $-L_2$ & $-L_1$ & 0 & 0 & 0 \\  \hline
\boldmath $L_1$ & $-L_3$ & $-L_2$ & $-L_1$ & 0 & 0 & 0 & $L_1$ \\  \hline
\boldmath $L_2$ & $-L_2$ & $-L_1$ & 0 & 0 & 0 & $L_1$ & $L_2$ \\  \hline
\boldmath $L_3$ & $-L_1$ & 0 & 0 & 0 & $L_1$ & $L_2$ & $L_3$ \\
\hline
 \end{tabular}
}
\label{tab:3biMinSum}}%
\end{table*}

\section{Error Models for Faulty Hardware}\label{sec:models}
In this paper, we assume that the faulty hardware introduces transient errors only during function computation.
For the performance analysis of faulty decoders, specific error models have been considered in previous works.
In~\cite{huang2013analysis,huang2013gallager,Balatsoukas14ComL}, transient errors are assumed to appear at a binary level on message wires between VNs and CNs.
In~\cite{varshney2011performance,ngassa2013min}, the noise effect is represented by a random variable independent of the function inputs and applies only through a deterministic error injection function.
Here we propose a more general error model which includes the above cases.

For the noisy-DE analysis, the considered faulty functions have to be symmetric, which implies that the error probability of the decoder does not change when flipping a codeword symbol.
As a consequence, the error probability of the decoder does not depend on the transmitted codeword, which greatly simplifies the analysis. 
Here, we introduce new symmetry conditions for the general error models.
We then discuss possible simplifications of the general model and introduce two particular simple error models which allow the asymptotic analysis of the faulty iterative decoding.

\subsection{General Faulty Functions and Symmetry Conditions}\label{sec:genmodel}
To describe general faulty functions, we replace the deterministic functions $\Phi^{(c)}$, $\Phiv$, $\Phi^{(a)}$ introduced in Section~\ref{sec:FaultyFAID} by the following conditional Probability Mass Functions (PMF). 
Denote $\tilde{\mvtoc}_{d_v}$, $\tilde{\mctov}_{d_c}$, and $\tilde{\gamma}$ the noisy versions of $\mvtoc_{d_v}$, $\mctov_{d_c}$, $\gamma$, and denote $\tilde{\boldsymbol{\mvtoc}} = [\tilde{\mvtoc}_1,\dots, \tilde{\mvtoc}_{d_c-1}] $, $\tilde{\boldsymbol{\mctov}} = [\tilde{\mctov}_1,\dots, \tilde{\mctov}_{d_v-1}] $, $\tilde{\boldsymbol{\mctov}}^{\star} = [\tilde{\mctov}_1,\dots, \tilde{\mctov}_{d_v}] $ their vector representations.
Then a faulty VNU is defined as the conditional PMF
\begin{equation}\label{eq:sVNU}
\text{P}^{(v)}(\tilde{\mvtoc}_{d_v}|\tilde{\boldsymbol{\mctov}},y ), 
\end{equation}
a faulty CNU is defined as 
\begin{equation}\label{eq:sCNU}
\text{P}^{(c)}(\tilde{\mctov}_{d_c}| \tilde{\boldsymbol{\mvtoc}}) , 
\end{equation}
and a faulty APP is defined as 
\begin{equation}\label{eq:sAPP}
\text{P}^{(a)}(\tilde{\gamma}|\tilde{\boldsymbol{\mctov}}^{\star},y ) . 
\end{equation}
The described model is memoryless and takes only into account transient errors in the decoder, but it ignores permanent errors and possible dependencies with previous or future function arguments.
However it is general enough to represent any type of memoryless mapping and error model. 

For the noisy-DE analysis, the considered faulty functions have to be symmetric.
The definitions of symmetry given in~\cite{varshney2011performance} only consider the particular case of error injection functions and are not sufficient to characterize the symmetry of the above faulty functions.
In the following, we introduce more general definitions of symmetry.

\begin{definition}\label{def:sym}
\begin{enumerate}
 \item A faulty VNU is said to be symmetric if
 \begin{equation}\label{eq:randvnusym}
  \text{P}^{(v)}(\tilde{\mvtoc}_{d_v}|\tilde{\boldsymbol{\mctov}},y ) = \text{P}^{(v)}(-\tilde{\mvtoc}_{d_v}|-\tilde{\boldsymbol{\mctov}},-y ) .
 \end{equation}
 \item A faulty CNU is said to be symmetric if
  \begin{equation}\label{eq:randcnusym}
  \text{P}^{(c)}(\tilde{\mctov}_{d_c}|\mathbf{a}.\tilde{\boldsymbol{\mvtoc}}) = \text{P}^{(c)}\left( \left( \prod \mathbf{a} \right) \tilde{\mctov}_{d_c}| \tilde{\boldsymbol{\mvtoc}}\right) .
 \end{equation}
where $\mathbf{a} = [a_1,\dots,a_{d_c-1}]$, $a_i \in \{-1,1\}$, $\mathbf{a}.\tilde{\boldsymbol{\mvtoc}}$ is the component by component product of $\mathbf{a} $ and $\tilde{\boldsymbol{\mvtoc}} $, and $\prod \mathbf{a}$ is the product of all components in vector $\mathbf{a}$.
 \item A faulty APP is said to be symmetric if 
  \begin{equation}\label{eq:randappsym}
 \text{P}^{(a)}(\tilde{\mvtoc}_{d_v}|\tilde{\boldsymbol{\mctov}}^{\star},y ) = \text{P}^{(a)}(-\tilde{\mvtoc}_{d_v}|-\tilde{\boldsymbol{\mctov}}^{\star},-y ) .
 \end{equation}
\end{enumerate}
\end{definition}
Note that our definitions of symmetry are the same as the ones originally introduced in~\cite{richardson01IT2} for deterministic decoders, except that ours apply on conditional PMFs instead of deterministic mappings.

\subsection{Faulty Function Decomposition}
\fig{noise_effect}{Function decomposition for the CNU}{0.4}{t}
A possible simplification of the general models described in the previous section is to consider that the noise appears only at the output of a function computation.
More precisely, we assume that the noisy function can be decomposed as a noiseless function followed by the noise effect, as in Fig.~\ref{fig:noise_effect} for the case of the CNU.
In this simplified error model,  $\mctov_{d_c}$, $\mvtoc_{d_v}$, and $\gamma$, represent the messages at the output of the noiseless CNU, VNU, and APP computation respectively, and their noisy versions are denoted $\tilde{\mctov}_{d_c}$, $\tilde{\mvtoc}_{d_v}$, $\tilde{\gamma}$.
The noisy output is assumed to be independent of the inputs conditionally to the noiseless output,~\emph{i.e.}, for the case of faulty CNU, this gives
 $\text{P}^{(c)}(\tilde{\mctov}_{d_c}|\mctov_{d_c},\tilde{\boldsymbol{\mvtoc}} ) = \text{P}^{(c)}(\tilde{\mctov}_{d_c}|\mctov_{d_c} )$.
Furthermore, as the noiseless output is obtained from a deterministic function of the inputs, we get
\begin{equation}\label{eq:condind}
 \text{P}^{(c)}(\tilde{\mctov}_{d_c}|\tilde{\boldsymbol{\mvtoc}} ) = \text{P}^{(c)}(\tilde{\mctov}_{d_c}|\Phic(\tilde{\boldsymbol{\mvtoc}}) ) .
\end{equation}
The same conditions hold for the faulty VNU and APP.

The noise effects at the output of $\Phic$ and $\Phiv$ are represented by probability transition matrices $\Pi^{(v)}$ and $\Pi^{(c)}$ respectively, with
\begin{equation}\label{eq:defptm}
 \Pi^{(c)}_{k,m} = \text{Pr}( \tilde{\mctov}_{d_c} = m | \mctov_{d_c} = k) , ~ \Pi_{k,m}^{(v)} = \text{Pr}( \tilde{\mvtoc}_{d_v} = m | \mvtoc_{d_v} = k ), ~~~ \forall k,m \in \mathcal{M}
\end{equation}
wherein the matrix entries are indexed by the values in $\mathcal{M}$.
This indexing is used for all the vectors and matrices introduced in the remaining of the paper.
The noise effect on $\Phia$ is modeled by the probability transition matrix $\Pi^{(a)}$ with
\begin{equation}
 \Pi^{(a)}_{k, m} = \text{Pr}( \tilde{\gamma} = m | \gamma = k), ~~~  \forall k,m \in \bar{\mathcal{M}} .
\end{equation}
The forms of the probability transition matrices depend on the considered error models.
In the next section, two simple examples derived from this simplified model are introduced.
They will then be considered in the noisy-DE analysis.

Note that in the above decomposition model the noise is added only at a message level at the output of the noiseless functions.
An alternative model would be to consider noise effect introduced \emph{inside} the functions, for example during elementary operations such as the minimum computation between two elements in $\Phic$, as in~\cite{ngassa2013min}.
While the decomposition model introduced here may not capture all the noise effects, it is sufficient for the analysis of the behavior and robustness of noisy decoders without requiring knowledge of a particular hardware implementation.
More accurate models will be considered in future works.


Note that some faulty functions cannot be decomposed as a deterministic mapping followed by the noise effect. 
For example, it can be verified that the faulty minimum function defined as
 \begin{equation}
  \tilde{\eta}_3 = \left\{
\begin{array}{rl}
  & \min(\mu_1,\mu_2) \text{ with probability } 1 - p  \\
  & \max(\mu_1,\mu_2) \text{ with probability } p
\end{array}
\right.
 \end{equation}
 does not satisfy~\eqref{eq:condind}.

\subsection{Particular Decoder Noise Models}
In the following, two particular noise models models that have been proposed in~\cite{ngassa14ITA} will be considered.
They are derived from the above decomposition model by specifying particular transition matrices $\Pi^{(c)},\Pi^{(v)},\Pi^{(a)}$ and will be considered for the noisy-DE analysis.

\subsubsection{Sign-Preserving error model}
The first model is called the Sign-Preserving (SP) model.
It has a SP property, meaning that noise is assumed to affect only the message amplitude, but not its sign.
Although this model is introduced for the purpose of asymptotic analysis, it is also a practical model, as protecting the sign can be realized at the hardware level by proper circuit design.
The probability transition matrices for the SP-Model can be constructed from a SP-transfer matrix defined as follows.
\begin{definition}\label{def:spmodel}
 The SP-transfer matrix $\Pi^{(\text{SP})}(p,s)$ is a matrix of size $(2s+1)\times(2s+1)$ such that
 \begin{align} \notag
  &  \Pi_{k,k}^{(\text{SP})}(p,s)  = 1 - p, ~~~ \Pi_{k,0}^{(\text{SP})}(p,s)  = \frac{p}{s},  ~~~ \Pi_{0,k}^{(\text{SP})}(p,s)  = \frac{p}{2s} \\ \notag
  & \Pi_{k,m}^{(\text{SP})}(p,s)  = \frac{p}{s}, \text{ for } m\neq k \neq 0, ~ \text{sign}(m) = \text{sign}(k) \\
  & \Pi_{k,k}^{(\text{SP})}(p,s) = 0 \text{, elsewhere} .
\end{align}
\end{definition}
According to this definition, a strictly positive message can be altered to only another positive message and the same holds for strictly negative messages.

The matrices $\Pi^{(c)}$, $\Pi^{(v)}$, and $\Pi^{(a)}$ can be now obtained from $\Pi^{(\text{SP})}$ as a template.
The noise level parameter at the output of $\Phic$ is given by the parameter $p_c$, and the corresponding probability transition matrix is given by $\Pi^{(c)} = \Pi^{(\text{SP})}(p_c,s)$.
In the same way, the noise level parameters at the output of $\Phi^{(v)}$ and $\Phi^{(a)}$ are denoted $p_v$ and $p_a$ respectively, and the corresponding probability transition matrices are given by $\Pi^{(v)} = \Pi^{(\text{SP})}(p_v,s)$ and $\Pi^{(a)} = \Pi^{(\text{SP})}(p_a,s')$.
In the following, the collection of hardware noise parameters will be denoted $\nu = (p_v,p_c,p_a)$.
The probability transition matrix $\Pi^{(a)}$ is of size $(2s'+1)\times(2s'+1)$ because the APP~\eqref{eq:Phi_app} is computed on the alphabet $\bar{\mathcal{M}} $ of size $(2s'+1)$.
It can be verified that if the deterministic mappings $\Phiv$, $\Phic$, $\Phia$, are symmetric in the sense of~\cite[Definition 1]{richardson01IT2}, then the SP-model gives symmetric faulty functions from conditions~\eqref{eq:randvnusym},~\eqref{eq:randcnusym},~\eqref{eq:randappsym}, in Definition~\ref{def:sym}.

\subsubsection{Full-Depth error model}
The second model is called the Full-Depth (FD) model.
This model is potentially more harmful than the SP-Model because the noise affects both the amplitude and the sign of the messages. 
However, it does not require hardware sign-protection any more.
The FD-transfer  matrix is defined as follows.
\begin{definition}\label{def:fdmodel}
 The FD-transfer matrix $\Pi^{(\text{FD})}(p,s)$ is a matrix of size $(2s+1)\times(2s+1)$ such that
 \begin{align} \notag
  & \Pi_{k,k}^{(\text{FD})}(p,s)  = 1 - p, \\
  &  \Pi_{k,m}^{(\text{FD})}(p,s)  = \frac{p}{s}, \text{ for } m\neq k .
\end{align}
\end{definition}
The FD-transfer Matrix defines a $(2s+1)$-ary symmetric model of parameter $p$.
The noise level parameters at the end of $\Phi_c$, $\Phi_v$, $\Phi_a$, are denoted as before $p_c$, $p_v$, $p_a$, respectively, and $\nu = (p_v,p_c,p_a)$.
The corresponding probability transition matrices are given by $\Pi^{(c)} = \Pi^{(\text{FD})}(p_c,s)$, $\Pi^{(v)} = \Pi^{(\text{FD})}(p_v,s)$, and $\Pi^{(a)} = \Pi^{(\text{FD})}(p_a,s')$.
It can be verified that if the deterministic mappings $\Phiv$, $\Phic$, $\Phia$, are symmetric in the sense of~\cite[Definition 1]{richardson01IT2}, then the FD-model gives symmetric faulty functions from the conditions~\eqref{eq:randvnusym},~\eqref{eq:randcnusym},~\eqref{eq:randappsym}, in Definition~\ref{def:sym}.

\section{Noisy Density Evolution}\label{sec:FaultyDE}
This section presents the noisy-DE recursion for asymptotic performance analysis of FAIDs on faulty hardware.
The DE~\cite{varshney2011performance} consists of expressing the Probability Mass Function (PMF) of the messages at successive iterations under the local independence assumption, that is the assumption that the messages coming to a node are independent.
As a result, the noisy-DE equations can be used to derive the error probability of the considered decoder as a function of the hardware noise parameters.
The noisy-DE analysis is valid on average over all possible LDPC code constructions, when infinite codeword length is considered.

In the following, we first discuss the all-zero codeword assumption which derives from the symmetry conditions of Definition~\ref{def:sym} and greatly simplifies the noisy-DE analysis.

\subsection{All-zero Codeword Assumption}
In~\cite{richardson01IT2}, it was shown that if the channel is output-symmetric, and the VNU and CNU functions are symmetric functions, the error probability of the decoder does not depend on the transmitted codeword.
From this codeword independence, one can compute the PMFs of the messages and the error probability of the decoder assuming that the all-zero codeword was transmitted.
The codeword independence was further extended in~\cite{ngassa14ITA,varshney2011performance} to the case of faulty decoders when the noise is introduced through symmetric error injection functions.
Unfortunately, the results of~\cite{varshney2011performance,ngassa14ITA} do not apply to our more general error models.
In particular, the proof technique of~\cite{varshney2011performance,ngassa14ITA} cannot be used when the noise is not introduced through deterministic error injection functions.
The following theorem thus restates the codeword independence for faulty functions described by the general error introduced in Section~\ref{sec:genmodel} and for the symmetry conditions of Definition~\ref{def:sym}.
\begin{te}\label{th:pe}
 Consider a linear code and a faulty decoder defined by a faulty VNU~\eqref{eq:sVNU}, a faulty CNU~\eqref{eq:sCNU}, and a faulty APP~\eqref{eq:sAPP}.
 Denote $P_e^{(\ell)}(\mathbf{x})$ the probability of error of the decoder at iteration $\ell$ conditioned on the fact that the codeword $\mathbf{x}$ was transmitted.
 If the transmission channel is symmetric in the sense of~\cite[Definition 1]{richardson01IT2} and if the faulty VNU, CNU, and APP are symmetric in the sense of Definition~\ref{def:sym}, then $P_e^{(\ell)}(\mathbf{x})$ does not depend on $\mathbf{x}$.
\end{te}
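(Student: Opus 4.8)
The plan is to generalize the Richardson--Urbanke coupling argument of \cite{richardson01IT2} from deterministic update maps to the conditional PMFs \eqref{eq:sVNU}--\eqref{eq:sAPP}, by exhibiting a probability-preserving bijection between the decoding trajectories obtained under transmission of an arbitrary codeword $\mathbf{x}$ and those obtained under the all-zero codeword. First I would attach to each variable node $v$ the sign $\sigma_v = (-1)^{x_v} \in \{-1,+1\}$ induced by the BSC mapping $0 \to +\mathrm{B}$, $1 \to -\mathrm{B}$, and define the sign transformation $\mathcal{S}$ that multiplies the channel value $y_v$, the APP $\tilde{\gamma}_v$, and both messages carried by every edge incident to $v$ (the variable-to-check $\tilde{\mvtoc}_{v\to c}$ and the check-to-variable $\tilde{\mctov}_{c\to v}$) by $\sigma_v$. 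The single algebraic fact driving the whole argument is that, because $\mathbf{x}$ is a codeword, every check $c$ satisfies $\sum_{v\sim c} x_v \equiv 0 \pmod 2$, hence $\prod_{v\sim c}\sigma_v = +1$.

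Next I would write the joint PMF of an entire decoding trajectory (all channel outputs and all messages, over all iterations up to $\ell$) conditioned on $\mathbf{x}$ as a product of local factors: one channel factor $\text{P}(y_v\mid x_v)$ per variable node, one VNU factor of the form \eqref{eq:sVNU} per variable-to-check update, one CNU factor \eqref{eq:sCNU} per check-to-variable update, and one APP factor \eqref{eq:sAPP} per APP computation; the memoryless model, in which distinct update instances use independent noise, makes these factors independent so the product form is exact. I would then apply $\mathcal{S}$ and check that each factor is mapped to the corresponding factor of the all-zero trajectory. The channel factor is preserved by output symmetry, $\text{P}(y_v\mid x_v) = \text{P}(\sigma_v y_v\mid 0)$. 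The VNU and APP factors are preserved by \eqref{eq:randvnusym} and \eqref{eq:randappsym} respectively, since every input and the output of such an update share the same node sign $\sigma_v$, so the case $\sigma_v=-1$ is exactly the stated symmetry. The decisive step is the CNU factor: the inputs $\tilde{\mvtoc}_{v'\to c}$, $v'\neq v$, are scaled by the vector $\mathbf{a}=[\sigma_{v'}]_{v'\neq v}$, so \eqref{eq:randcnusym} rewrites the factor with the output scaled by $\prod\mathbf{a} = \prod_{v'\neq v}\sigma_{v'}$; combining this with the output's own sign $\sigma_v$ produces $\prod_{v'\sim c}\sigma_{v'}=+1$, so the flip on the outgoing message is undone precisely because $\mathbf{x}$ is a codeword, and the factor collapses to its all-zero counterpart.

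Finally, since $\mathcal{S}$ is an involutive bijection on trajectories and each factor matches, it is measure preserving between the $\mathbf{x}$-ensemble and the $\mathbf{0}$-ensemble. It remains to see that $\mathcal{S}$ carries the error event at $v$ to the error event at $v$: the hard decision is the sign of the APP, the map $\tilde{\gamma}_v \mapsto \sigma_v\tilde{\gamma}_v$ flips the decision exactly when $x_v=1$, and the random tie-breaking at $\tilde{\gamma}_v=0$ is symmetric, so a mismatch with $x_v$ corresponds to a mismatch with $0$. Summing over $v$ yields $P_e^{(\ell)}(\mathbf{x}) = P_e^{(\ell)}(\mathbf{0})$ for every $\mathbf{x}$, which is the claim. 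I expect the main obstacle to be bookkeeping rather than conceptual: one must orient every message and index every local factor so that the single sign $\sigma_v$ consistently governs both half-edges at $v$, and then confirm that the CNU symmetry \eqref{eq:randcnusym}, which acts on the \emph{inputs} of a check, interacts with the parity constraint to produce exactly the output sign demanded by $\mathcal{S}$.
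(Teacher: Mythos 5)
Your proof is correct, and it rests on exactly the same three ingredients as the paper's: the channel output symmetry, the symmetry conditions of Definition~\ref{def:sym}, and the fact that the parity constraint forces $\prod_{v\sim c}\sigma_v=+1$ at every check, which is what cancels the $\prod\mathbf{a}$ factor produced by~\eqref{eq:randcnusym}. The execution, however, differs in a way worth noting. The paper argues iteration by iteration: it proves by induction that the \emph{marginal} law of each message satisfies $P(\mvtoc_{i,j}^{(\ell)}\mid\mathbf{x}=\mathbf{a},\mathbf{y})=P(a_i\mvtoc_{i,j}^{(\ell)}\mid\mathbf{x}=\mathbf{1},\mathbf{a}.\mathbf{y})$, propagating this through the CNU, VNU and APP stages, and in doing so it factorizes the law of an outgoing message as a product of the laws of the incoming ones, explicitly invoking the independence of the incoming messages (the usual local tree assumption). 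You instead factorize the \emph{joint} law of the entire decoding trajectory into local channel/VNU/CNU/APP factors --- which is exact under the stated memoryless noise model regardless of cycles --- and show that the sign involution $\mathcal{S}$ maps each factor of the $\mathbf{x}$-ensemble onto the corresponding factor of the all-zero ensemble, then pushes error events to error events (including the uniform tie-break at $\tilde\gamma_v=0$). Your global coupling therefore buys a version of the statement that does not need the incoming-message independence assumption, at the cost of somewhat heavier bookkeeping; the paper's recursive version is lighter and plugs directly into the density evolution machinery of Section~\ref{sec:FaultyDE}, where the same per-iteration marginal recursions reappear. The final step is also common to both: averaging over $\mathbf{y}$ and substituting $\mathbf{y}'=\mathbf{a}.\mathbf{y}$ (legitimate by channel symmetry) to remove the dependence on $\mathbf{a}$.
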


\begin{proof}
 See Appendix.
\end{proof}
Theorem~\ref{th:pe} states that for a symmetric transmission channel and symmetric faulty functions, the error probability of the decoder is independent of the transmitted codeword.
All the error models considered in the paper are symmetric and as a consequence, we will assume that the all-zero codeword was transmitted.
Note that when the decoder is not symmetric, DE can be performed from the results of~\cite{bennatan06IT,wang05IT}.
In this case, it is not possible anymore to assume that the all-zero codeword was transmitted, and the analysis becomes much more complex.

\subsection{Noisy-DE Equations}
In this section, we assume that the all-zero codeword was transmitted, and we express the PMFs of the messages at successive iterations.
The error probability of the decoder at a given iteration can then be computed from the PMFs of the messages at the considered iteration.
The analysis is presented for regular LDPC codes.
However, the generalization to irregular codes is straightforward. 

Let the $N_s$-tuple $\mathbf{q}^{(\ell)}$ denote the PMF of an outgoing message from a VN at $\ell$-th iteration.
In other words, the $\mvtoc$-th component $\pvtoc_{\mvtoc}^{(\ell)}$ of $\mathbf{\pvtoc}^{(\ell)}$ is the probability that the outgoing message takes the value $\mvtoc \in \mathcal{M}$.
Similarly, let $\mathbf{r}^{(\ell)}$ denote the PMF of an outgoing message from a CN.
The PMFs of noisy messages are represented by $\mathbf{\tilde{\pvtoc}}^{(\ell)}$ and $\mathbf{\tilde{\pctov}}^{(\ell)}$, respectively.
In the following, the noisy-DE recursion is expressed with respect to general probability transition matrices $\Pi^{(c)}$, $\Pi^{(v)}$, $\Pi^{(a)}$ .
To obtain the noisy-DE equations for a specific error model, it suffices to replace these general probability transition matrices with the ones corresponding to the considered model.

The density evolution is initialized with the PMF of the channel value
\[ \pvtoc_{-B}^{(0)}=1-\alpha  \hspace*{5mm} \pvtoc_{+B}^{(0)}=\alpha \hspace*{5mm} \pvtoc_{k}^{(0)}=0 \mbox{ elsewhere.} \]
Denote $\tilde{\mathbf{\pvtoc}}_{\boldsymbol{\mvtoc}}^{(\ell-1)}$ the $(d_c-1)$-tuple associated to $\boldsymbol{\mvtoc}$.
More precisely, if the $k$-th component of $\boldsymbol{\mvtoc}$ is given by $\mvtoc_k$, then the $k$-th component of $\tilde{\mathbf{\pvtoc}}_{\boldsymbol{\mvtoc}}^{(\ell-1)}$ is given by $\tilde{\pvtoc}_{\mvtoc_k}^{(\ell-1)}$.
The PMF $\mathbf{\pctov}^{(\ell)}$ of the output of the CNU is obtained from the expression of $\Phi_c$ as $\forall \mctov \in \mathcal{M}$,
\begin{equation}
\pctov_{\mctov}^{(\ell)} = \underset{\boldsymbol{\mvtoc} : \Phi_c(\boldsymbol{\mvtoc})=\mctov }{\sum} \prod \tilde{\mathbf{\pvtoc}}_{\boldsymbol{\mvtoc}}^{(\ell-1)}
\label{eq:DEchecknode}
\end{equation}
where the vector product operator is performed componentwise on vector elements.
The noisy PMF is then obtained directly in vector form as
\begin{equation}\label{eq:DEnoisychecknode}
 \mathbf{\tilde{\pctov}}^{(\ell)} = \Pi^{(c)} \mathbf{\pctov}^{(\ell)} .
\end{equation}
Denote $\tilde{\mathbf{\pctov}}_{\boldsymbol{\mctov}}^{(\ell)}$ the $(d_v-1)$-tuple associated to $\boldsymbol{\mctov}$.
The PMF $\mathbf{\pvtoc}^{(\ell)}$ of the output of the VNU is obtained from the expression of $\Phi_v$ as $\forall \mvtoc \in \mathcal{M}$,
\small
\begin{equation}
\pvtoc_{\mvtoc}^{(\ell)} = \underset{\boldsymbol{\mctov} : \Phi_v(\boldsymbol{\mctov},-B)=\mvtoc }{\sum} \pvtoc_{-B}^{(0)} \prod \tilde{\mathbf{\pctov}}_{\boldsymbol{\mctov}}^{(\ell)} ~~ + \underset{\boldsymbol{\mctov} : \Phi_v(\boldsymbol{\mctov},+B)=\mvtoc }{\sum} \pvtoc_{+B}^{(0)} \prod \tilde{\mathbf{\pctov}}_{\boldsymbol{\mctov}}^{(\ell)}
                    \label{eq:DEbitnode}
\end{equation}
\normalsize
and
\begin{equation}\label{eq:DEnoisybitnode}
 \mathbf{\tilde{\pvtoc}}^{(\ell)} =  \Pi^{(v)} \mathbf{\pvtoc}^{(\ell)}.
\end{equation}
Finally, applying the sequence of 4 equations \eqref{eq:DEchecknode}, \eqref{eq:DEnoisychecknode}, \eqref{eq:DEbitnode} and \eqref{eq:DEnoisybitnode} 
implements one recursion of the noisy-DE over the BSC channel.

The error probability of the decoder can be obtained from the above recursion and from the PMF of the messages at the end of the APP computation.
Denote $\tilde{\mathbf{\pctov}}_{\boldsymbol{\bar{\mctov}}}^{(\ell)}$ the $d_v$-tuple associated to $\boldsymbol{\bar{\mctov}}$, and denote $\mathbf{\pvtoc}_{\text{app}}^{(\ell)}$ and $\mathbf{\tilde{\pvtoc}}_{\text{app}}^{(\ell)}$ the respective noiseless and noisy PMFs of the messages at the output of the APP computation.
They can be expressed from~\eqref{eq:Phi_app} as $\forall \gamma \in \bar{\mathcal{M}}$,
\small
\begin{equation} \notag
\pvtoc_{\text{app},\gamma}^{(\ell)} =  \underset{\boldsymbol{\bar{\mctov}}: \Phi_a(\tilde{\boldsymbol{\mctov}}^{\star},-B)=\gamma }{\sum} \pvtoc_{-B}^{(0)} \prod \tilde{\mathbf{\pctov}}_{\boldsymbol{\bar{\mctov}}}^{(\ell)} ~~+ \underset{\boldsymbol{\bar{\mctov}} : \Phi_a(\tilde{\boldsymbol{\mctov}}^{\star},+B)=\gamma }{\sum} \pvtoc_{+B}^{(0)} \prod \tilde{\mathbf{\pctov}}_{\boldsymbol{\bar{\mctov}}}^{(\ell)}
\end{equation}\normalsize
and
\begin{equation} \label{eq:qappnoisy}
 \mathbf{\tilde{\pvtoc}}_{\text{app}}^{(\ell)} =  \Pi^{(a)} \mathbf{\pvtoc}_{\text{app}}^{(\ell)} .
\end{equation}
Finally, for a given $\alpha$ and hardware noise parameters $\nu = (p_v,p_c,p_a)$, the error probability at each iteration can be computed under the all-zero codeword assumption
as
\begin{equation}\label{eq:err_prob}
 P_{e,\nu}^{(\ell)}(\alpha) = \frac{1}{2}  \tilde{\pvtoc}_{\text{app},0}^{(\ell)} + \sum_{k < 0} \tilde{\pvtoc}_{\text{app},k}^{(\ell)} .
\end{equation}

Lower bounds on the error probability can be obtained as follows~\cite{Kameni14Coms}.
\begin{prop}\label{prop:lower-boundsFAID}
The following lower bounds hold at every iteration $\ell$ 
\begin{enumerate}
 \item For the SP model, $P_{e,\nu}^{(\ell)}(\alpha)  \geq \frac{1}{2s'} p_a$
 \item For the FD model, $P_{e,\nu}^{(\ell)}(\alpha)  \geq \frac{1}{2} p_a + \frac{p_a}{4s'}$
\end{enumerate}
\end{prop}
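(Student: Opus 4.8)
The plan is to derive both inequalities directly from the error expression~\eqref{eq:err_prob} together with the APP noise injection~\eqref{eq:qappnoisy}, using only the fact that the noiseless APP PMF $\mathbf{\pvtoc}_{\text{app}}^{(\ell)}$ is a genuine probability vector (nonnegative, summing to one) at every iteration. Since the resulting bound will involve only the final transition $\Pi^{(a)}$ applied to $\mathbf{\pvtoc}_{\text{app}}^{(\ell)}$, and never the details of the recursion producing $\mathbf{\pvtoc}_{\text{app}}^{(\ell)}$, the inequality will hold at every iteration $\ell$ automatically, which is exactly what the statement claims.

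First I would write the noisy APP law componentwise as $\tilde{\pvtoc}_{\text{app},m}^{(\ell)} = \sum_k \Pr(\tilde\gamma=m\mid\gamma=k)\,\pvtoc_{\text{app},k}^{(\ell)}$, substitute it into $P_{e,\nu}^{(\ell)}(\alpha) = \tfrac12\tilde{\pvtoc}_{\text{app},0}^{(\ell)} + \sum_{k<0}\tilde{\pvtoc}_{\text{app},k}^{(\ell)}$, and introduce the two aggregate quantities $q_0 := \pvtoc_{\text{app},0}^{(\ell)}$ and $Q_- := \sum_{k<0}\pvtoc_{\text{app},k}^{(\ell)}$, using $\sum_k \pvtoc_{\text{app},k}^{(\ell)}=1$ to eliminate the positive mass. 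After collecting terms the error probability takes the affine form $P_{e,\nu}^{(\ell)}(\alpha) = C + a\,q_0 + b\,Q_-$, where $C$ is the claimed constant and $a,b$ are explicit coefficients; the lower bound then amounts to minimizing this affine functional over the simplex $\{q_0,Q_-\ge 0,\ q_0+Q_-\le 1\}$, whose minimum sits at a vertex.

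For the SP model, substituting $\Pi^{(\text{SP})}(p_a,s')$ and using that a strictly positive (resp. negative) level can only be sent to a same-sign level or to $0$, while $0$ spreads to the nonzero levels, the bookkeeping yields $C=\tfrac{p_a}{2s'}$, $a=\tfrac12-\tfrac{p_a}{2s'}$ and $b=1-\tfrac{p_a}{s'}$; both coefficients are nonnegative since $p_a\le 1\le s'$, so the minimum is attained at $q_0=Q_-=0$ and $P_{e,\nu}^{(\ell)}(\alpha)\ge\tfrac{p_a}{2s'}$. For the FD model, substituting the $(2s'+1)$-ary symmetric channel $\Pi^{(\text{FD})}(p_a,s')$ (off-diagonal entries all equal to $\tfrac{p_a}{2s'}$) gives $C=\tfrac{p_a}{2}+\tfrac{p_a}{4s'}$, with $b=1-p_a-\tfrac{p_a}{2s'}$ and $a=b/2$. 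The intuition is that even a perfectly decoded (all-positive) noiseless APP still leaks mass $\tfrac{p_a}{2}$ to the $s'$ negative levels and, after the factor $\tfrac12$, mass $\tfrac{p_a}{4s'}$ to $0$, which is precisely what $C$ records.

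The main obstacle is the FD case: the coefficients $a$ and $b$ are nonnegative only when $p_a\le\tfrac{2s'}{2s'+1}$, equivalently when the transfer matrix has dominant diagonal $1-p_a\ge\tfrac{p_a}{2s'}$. I would observe that this is exactly the meaningful range of the noise parameter, since beyond it the channel would map a symbol away from itself more often than it preserves it; within this range the affine functional is again minimized at $q_0=Q_-=0$, yielding $P_{e,\nu}^{(\ell)}(\alpha)\ge\tfrac{p_a}{2}+\tfrac{p_a}{4s'}$. A secondary point to flag is the orientation of the transition: because $\Pi^{(\text{SP})}$ is not symmetric, it must be applied in the probabilistically correct direction $\tilde{\pvtoc}_{\text{app},m}=\sum_k\Pr(\tilde\gamma=m\mid\gamma=k)\,\pvtoc_{\text{app},k}$, as using the transpose would spuriously replace the SP constant $\tfrac{p_a}{2s'}$ by $\tfrac{p_a}{4s'}$; for the FD model $\Pi^{(\text{FD})}$ is symmetric and no such care is needed.
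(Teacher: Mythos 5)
Your derivation is correct, and it cannot be compared line-by-line with the paper because the paper gives no proof at all: Proposition~\ref{prop:lower-boundsFAID} is imported from~\cite{Kameni14Coms}, the only in-text justification being the remark that $s'$ enters through the APP alphabet size. Your self-contained argument -- write $P_{e,\nu}^{(\ell)}$ as an affine function $C+a\,q_0+b\,Q_-$ of the noiseless APP masses via~\eqref{eq:qappnoisy} and~\eqref{eq:err_prob}, then minimize over the simplex -- is exactly the right mechanism, and I checked your coefficients: for SP, $C=p_a/(2s')$, $a=\tfrac12-p_a/(2s')$, $b=1-p_a/s'$; for FD, $C=p_a/2+p_a/(4s')$ with $b=1-p_a-p_a/(2s')$ and $a=b/2$. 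Two remarks. First, your FD computation silently corrects what must be a typo in Definition~\ref{def:fdmodel}: with off-diagonal entries $p/s$ the rows sum to $1+p$, so the intended matrix has off-diagonal $p/(2s)$, and only with that normalization does the constant come out as $p_a/2+p_a/(4s')$ rather than $p_a+p_a/(2s')$; it would be worth stating this correction explicitly rather than assuming it. Second, you are right that the FD bound genuinely fails when $p_a>2s'/(2s'+1)$ (at the vertex $Q_-=1$ the error probability drops below $C$), so the proposition as stated implicitly assumes the diagonally dominant regime; flagging this is a point in your favor, since the paper states the bound without restriction. Your observation about applying $\Pi^{(\text{SP})}$ in the correct direction (conditioning on the noiseless value) is also well taken -- the transpose would halve the SP constant to $p_a/(4s')$ -- and is a real pitfall given that the paper writes the update as a left matrix product against a matrix indexed $\Pi_{k,m}=\Pr(\tilde\gamma=m\mid\gamma=k)$.
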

The term $s'$ appears in the two lower bounds because the APP~\eqref{eq:Phi_app} is computed on the alphabet $\bar{\mathcal{M}}$ of size $2s'+1$.	
%

The asymptotic error probability of an iterative decoder is the limit of $P_{e,\nu}^{(\ell)}(\alpha) $ when $\ell$ goes to infinity.
If the limit exists, let us denote $P_{e,\nu}^{(+\infty)}(\alpha) = \displaystyle \lim_{\ell \rightarrow +\infty} P_{e,\nu}^{(\ell)}(\alpha)$.
In the case of noiseless decoders ($p_v = p_c = p_a=0)$, the maximum channel parameter $\alpha$ such that $P_{e,\nu}^{(+\infty)}(\alpha) = 0$ is called the DE \emph{threshold} of the decoder~\cite{richardson01IT2}.
However, the condition $P_{e,\nu}^{(+\infty)}(\alpha) = 0$ cannot be reached in general for faulty decoders.
For instance, from Proposition~\ref{prop:lower-boundsFAID}, we see that the noise in the APP computation prevents the decoder from reaching a zero error probability.
Thus, the concept of iterative decoding threshold for faulty decoders has to be modified, and adapted to the fact that only very low asymptotic error probabilities, bounded away from zero, are achievable.
The following section recalls the definition of the functional threshold that was introduced in~\cite{Kameni14Coms,ngassa14ITA} to characterize the asymptotic behavior of faulty decoders.
We then analyze in details the properties of the functional threshold.

\section{Analysis of Convergence Behaviors of Faulty Decoders}\label{sec:th}

Varshney in~\cite{varshney2011performance} defines the {\it useful} region as the set of parameters $\alpha$ for which $P_{e,\nu}^{(+\infty)}(\alpha) < \alpha $.
The useful region indicates what are the faulty hardware and channel noise conditions that a decoder can tolerate to reduce the level of noise.
However, there are situations where the decoder can actually reduce the noise while still experiencing a high level of error probability.
As a consequence, the useful region does not predict which channel parameters lead to a low level of error probability.
Another threshold characterization has been proposed in~\cite{Balatsoukas14ComL,varshney2011performance}, where a constant value $\lambda$ is fixed and the target-BER threshold is defined as the maximum value of the channel parameter $\alpha$ such that $P_{e,\nu}^{(+\infty)}(\alpha) \leq \lambda $.
However, the target-BER definition has its limitations.
The choice of lambda is arbitrary, and the target-BER threshold does not capture an actual "threshold behavior", defined as a sharp transition between a low level and a high level of error probability.

Very recently, in~\cite{Kameni14Coms,ngassa14ITA}, another threshold definition referred to as the functional threshold has been proposed to detect the sharp transition between the two levels of error probability.
In this section, we first recall the functional threshold definition.
We then provide a new detailed analysis of the functional threshold behaviors and properties.
In particular, we point out the limitations of the functional threshold for the prediction of the asymptotic performance of faulty decoders.


\subsection{Functional Threshold Definition}
Here, we recall the  functional threshold definition introduced in~\cite{Kameni14Coms,ngassa14ITA}.
The functional threshold definition uses the Lipschitz constant of the function $\alpha \mapsto P_{e,\nu}^{(+\infty)}(\alpha)$ defined as
\begin{definition}
 Let $P_{e,\nu}^{(+\infty)}:I \rightarrow \mathbb{R}$ be a function defined on an interval $I\subseteq \mathbb{R}$.
The {\em Lipschitz constant} of $P_{e,\nu}^{(+\infty)}$ in $I$ is defined as
\begin{equation}
L\left(P_{e,\nu}^{(+\infty)}, I\right) = \sup_{\alpha\neq \beta \in I} \frac{\lvert P_{e,\nu}^{(+\infty)}(\alpha)-P_{e,\nu}^{(+\infty)}(\beta)\rvert}{\lvert \alpha-\beta \rvert} \in \mathbb{R}_{+} \cup \{+\infty\}
\end{equation} 
For $a\in I$ and $\delta > 0$, let $I_a(\delta) = I \cap (a-\delta, a+\delta)$.  
The {\em (local) Lipschitz constant} of $P_{e,\nu}^{(+\infty)}$ in $\alpha\in I$ is defined by:
\begin{equation}
L\left(P_{e,\nu}^{(+\infty)}, \alpha\right) = \inf_{\delta > 0} L\left(P_{e,\nu}^{(+\infty)}, I_\alpha(\delta)\right) \in \mathbb{R}_{+} \cup \{+\infty\}
\end{equation} 
\end{definition}

Note that if $\alpha$ is a discontinuity point of $P_{e,\nu}^{(+\infty)}$, then $L\left(P_{e,\nu}^{(+\infty)}, \alpha\right) = +\infty$.
On the opposite, if $P_{e,\nu}^{(+\infty)}$ is differentiable in $\alpha$, then the Lipschitz constant in $\alpha$ corresponds to the absolute value of the derivative.
Furthermore, if $L\left(P_{e,\nu}^{(+\infty)}, I\right) < +\infty$, then $P_{e,\nu}^{(+\infty)}$ is  uniformly continuous on $I$ and almost everywhere differentiable. 
In this case, $P_{e,\nu}^{(+\infty)}$ is said to be {\em Lipschitz continuous} on $I$.

The functional threshold is then defined as follows.
\begin{definition}\label{def:ft}
For given decoder noise parameters $\nu = (p_v,p_c,p_a)$ and a given channel parameter $\alpha$, the decoder is said to be {\em functional} if it satisfies the three conditions below
\begin{description}
\item[$(a)$] The function $x \mapsto P_{e,\nu}^{(+\infty)}(x)$ is defined on $[0,\alpha]$,
\item[$(b)$] $P_{e,\nu}^{(+\infty)}$ is Lipschitz continuous on $[0, \alpha]$, and
\item[$(c)$] $L\left(P_{e,\nu}^{(+\infty)}, x\right)$ is an increasing function of $x\in [0, \alpha]$.
\end{description}

Then the functional threshold $\bar{\alpha}$ is defined as
\begin{equation}
 \bar{\alpha} = \sup \{ \alpha \mid \mbox{conditions } (a), (b) \mbox{ and } (c) \mbox{ above are satisfied}\}
\end{equation}
\end{definition}

The function $P_{e,\nu}^{(+\infty)}(x)$ is defined provided that there exist a limit of $P_{e,\nu}^{(\ell)}(x) $ when $\ell$ goes to infinity.
Condition $(a)$ is required because $P_{e,\nu}^{(\ell)}(x)$ does not converge for some particular decoders and noise conditions, as shown in~\cite{ngassa14ITA}.

The functional threshold is defined as the transition between two parts of the curve representing $P_{e,\nu}^{(\ell)}(\alpha)$ with respect to $\alpha$.
The first part corresponds to the channel parameters leading to a low level of error probability, \emph{i.e.}, for which the decoder can correct most of the errors from the channel.
In the second part, the channel parameters lead to a high level of error probability, meaning that the decoder does not operate properly..
Note that there are two possibilities.
If $L\left(P_{e,\nu}^{(+\infty)}, \bar{\alpha}\right) = +\infty$, then $\bar{\alpha}$ is a discontinuity point of $P_{e,\nu}^{(+\infty)}$ and the transition between the two levels is sharp.
If $L\left(P_{e,\nu}^{(+\infty)}, \bar{\alpha}\right) < +\infty$, then $\bar{\alpha}$ is just an inflection point of $P_{e,\nu}^{(+\infty)}$ and the transition is smooth. 
Using the Lipschitz constant defined in this section, it is possible to characterize the type of transition for the error probability and discriminate between the two cases. We provide more details on our approach in the next section.

%

\subsection{Functional Threshold Interpretation}
\begin{figure}[t]
\begin{center}
  \subfigure[~]{ \includegraphics[width=.3\linewidth]{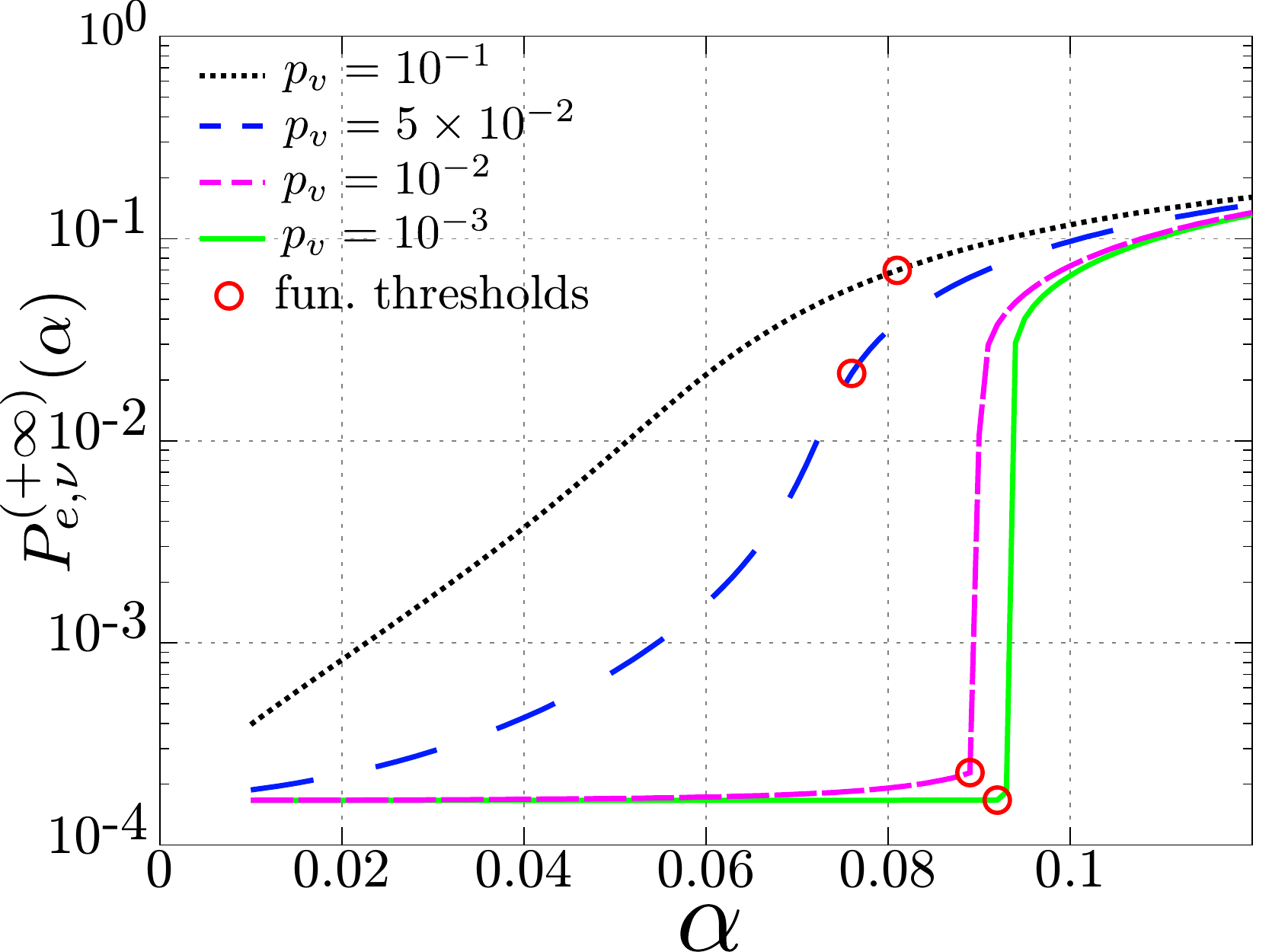}}
  \subfigure[~]{ \includegraphics[width=.3\linewidth]{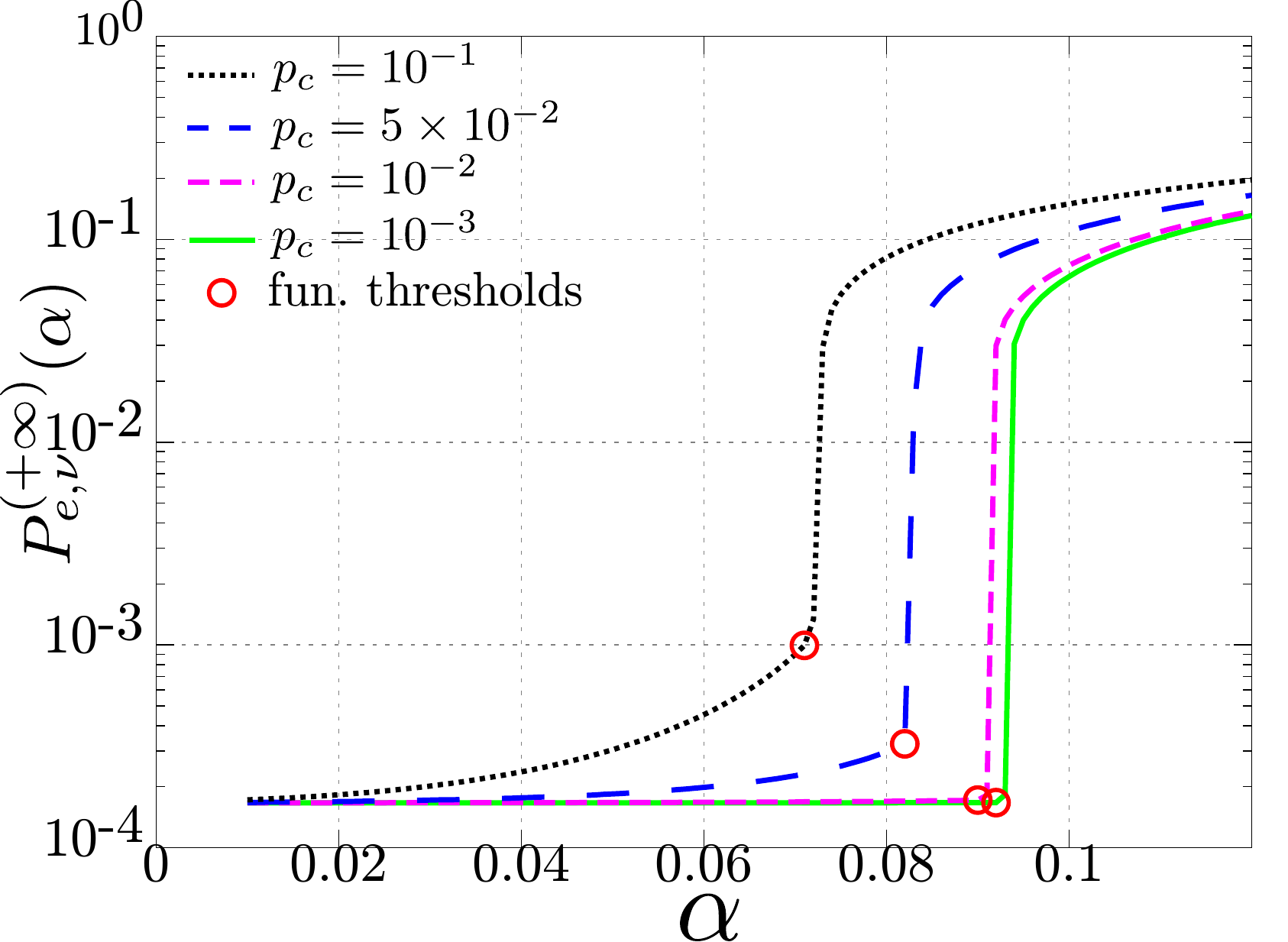}}
  \subfigure[~]{ \includegraphics[width=.3\linewidth]{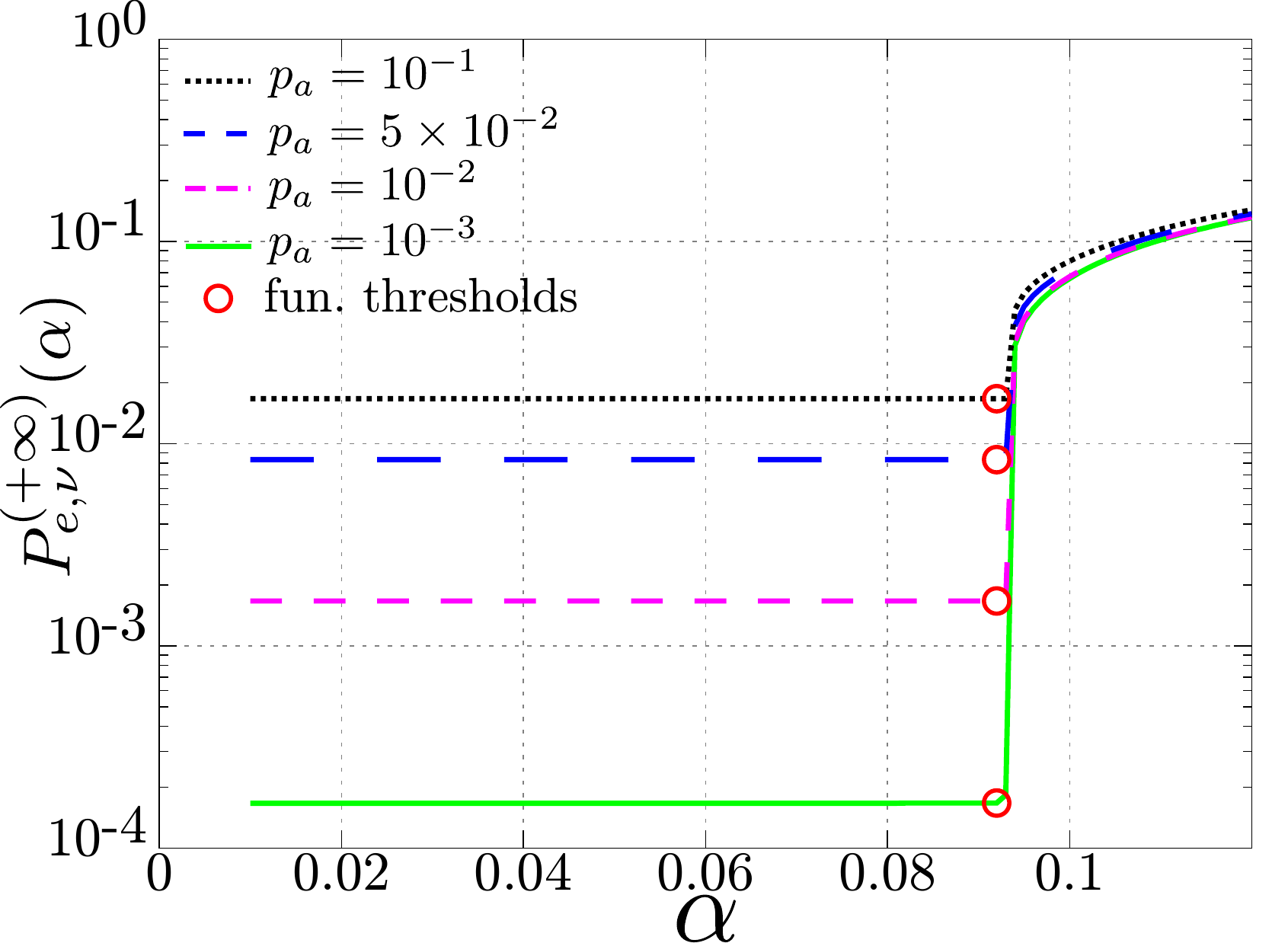}}
\end{center}
\caption{Asymptotic error probabilities for $(3,5)$ codes for the offset Min-Sum, for $B=1$, for the SP-Model, with (a) $p_c=10^{-3}$, $p_a=10^{-3}$, (b) $p_v=10^{-3}$, $p_a=10^{-3}$, (c)  $p_v=10^{-3}$, $p_c=10^{-3}$}
\label{fig:err_probs}
\end{figure}

\begin{figure}[t]
\begin{center}
  \subfigure[~]{ \includegraphics[width=.3\linewidth]{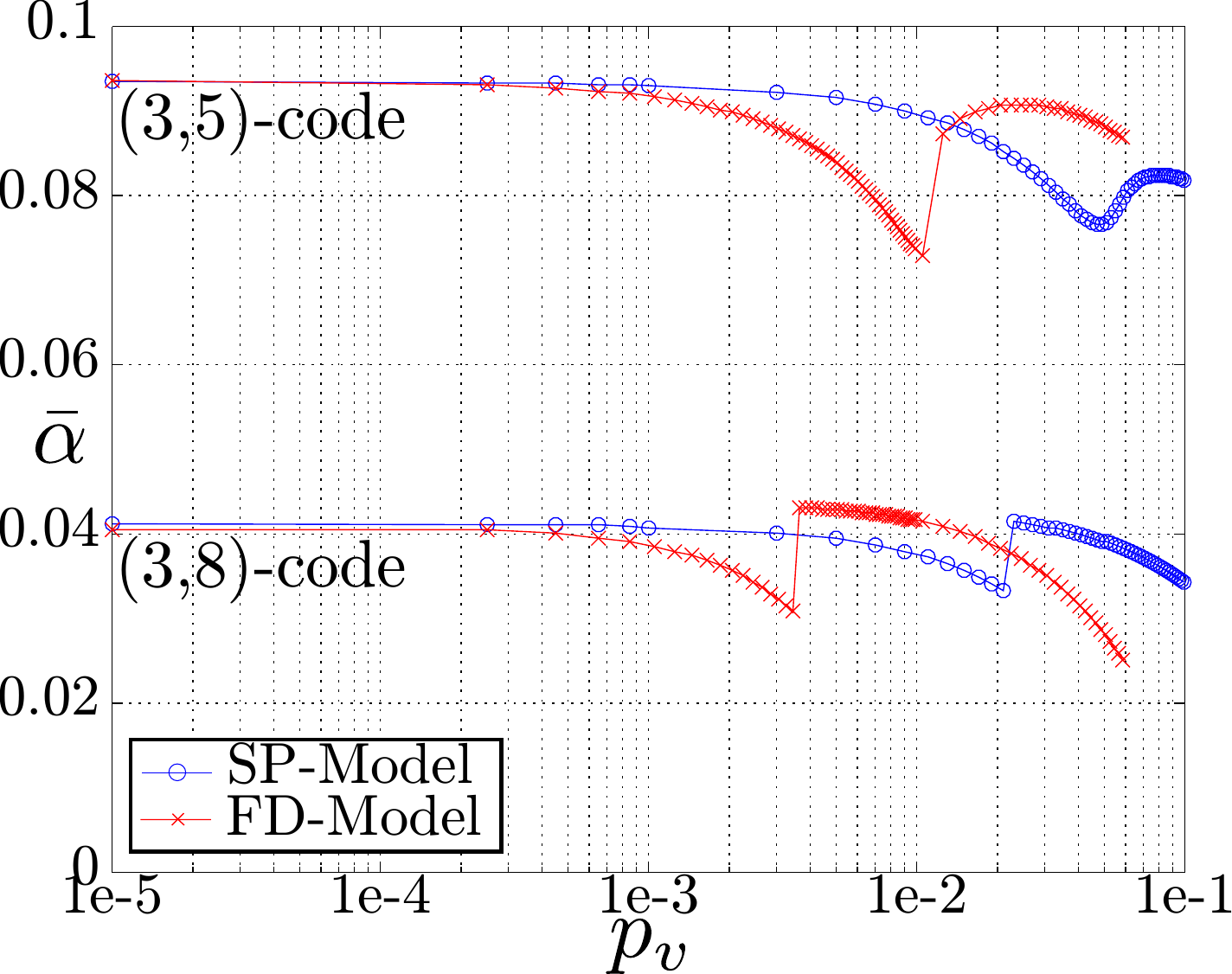}}
  \subfigure[~]{ \includegraphics[width=.3\linewidth]{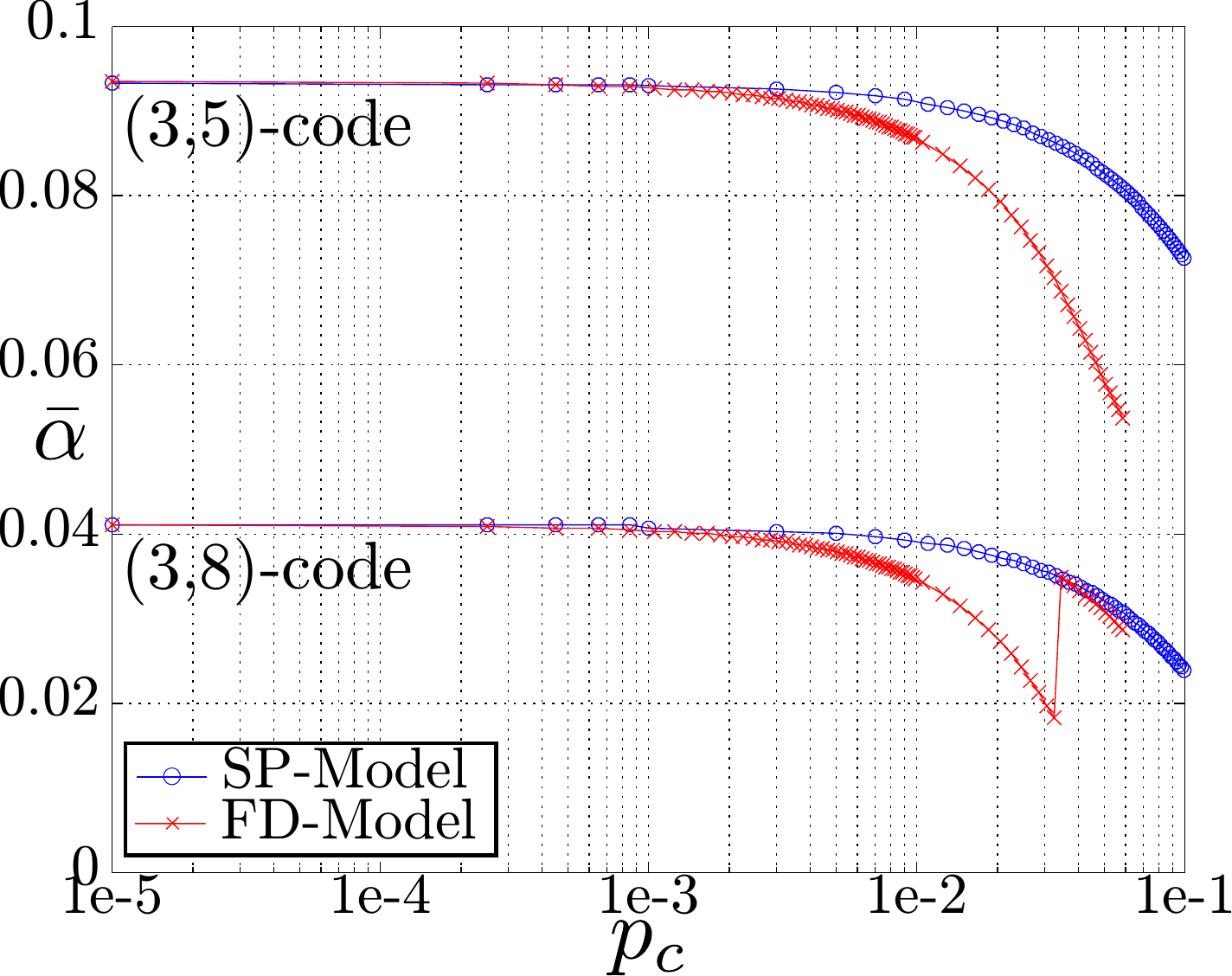}}
  \subfigure[~]{ \includegraphics[width=.33\linewidth]{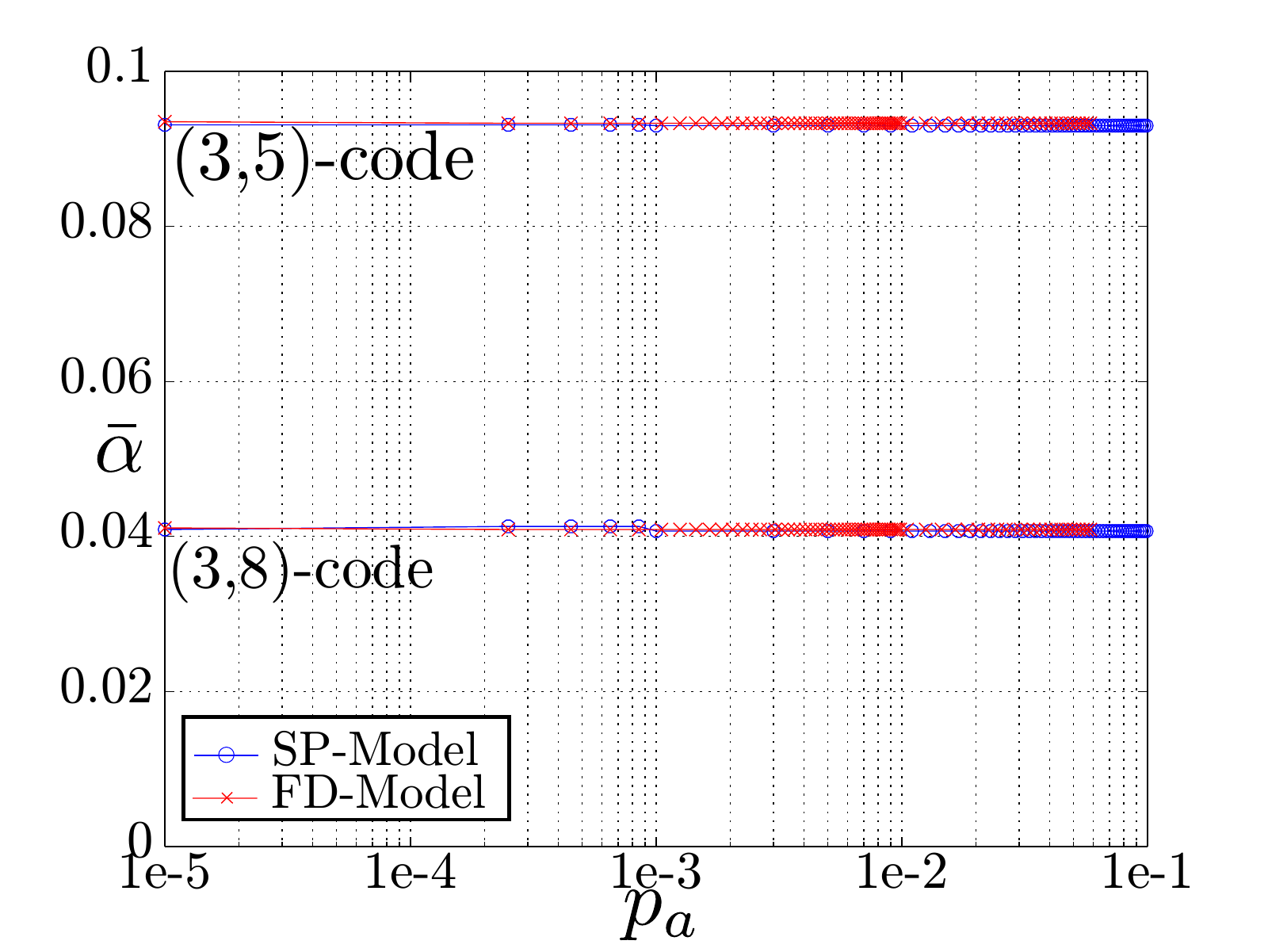}}
\end{center}
\caption{Functional regions for the offset min-sum, for $B=1$, (a) w.r.t. $p_v$, with $p_c= p_a=10^{-3}$ (SP-Model) and $p_c= p_a=10^{-4}$ (FD-Model), (b) w.r.t. $p_c$, with $p_v = p_a=10^{-3}$ (SP-Model) and $p_v= p_a=10^{-4}$ (FD-Model), (c) w.r.t. $p_a$, with $p_v = p_c=10^{-3}$ (SP-Model) and $p_v= p_c=10^{-4}$ (FD-Model)}
\label{fig:ftregions}
\end{figure}

As opposed to the work presented in~\cite{Kameni14Coms,ngassa14ITA}, where the functional threshold was introduced only to predict the asymptotic performance of the faulty Min-Sum decoder, our goal is to use the functional threshold as a tool to discriminate between different FAIDs and design faulty decoders which are robust to faulty hardware. 
In order to do so, we need a precise understanding of the behaviors and the limits of the functional threshold.
We present the analysis for regular $d_v=3$ LDPC codes, and for the offset Min-Sum decoder~\cite{Chen05Com} interpreted as a FAID.
Table~\ref{tab:3biMinSum} gives the LUT of the VNU of the $7$-level offset Min-Sum decoder considered for the analysis.

Fig.~\ref{fig:err_probs} (a) represents the asymptotic error probability $P_{e,\nu}^{(+\infty)}(\alpha)$ with respect to $\alpha$ for several values of $p_v$ for the SP-Model with $p_c = p_a = 10^{-3}$.
The circled points represent the positions of the functional thresholds obtained from Definition~\ref{def:ft}.
When $p_v$ is low, the threshold is given by the discontinuity point of the error probability curve.
But when $p_v$ becomes too high, there is no discontinuity point anymore, and the functional threshold is given by the inflection point of the curve.
However, the inflection point does not predict accurately which channel parameters lead to a low level of error probability.
Fig.~\ref{fig:err_probs} (b) represents $P_{e,\nu}^{(+\infty)}(\alpha)$ for several values of $p_c$ with $p_v = p_a = 10^{-3}$.
In all the considered cases, the functional threshold is given by the discontinuity point of the error probability curve.
Fig.~\ref{fig:err_probs} (c) represents $P_{e,\nu}^{(+\infty)}(\alpha)$ for several values of $p_a$ with $p_v = p_c = 10^{-3}$.
In this case, not only the functional threshold is always given by the discontinuity point of the error probability curve, but the position of the functional threshold position does not seem to depend on the value of $p_a$.

Fig.~\ref{fig:ftregions} (a) shows the functional thresholds $\bar{\alpha}$ as a function of the hardware noise parameter at the VNU, $p_v$.
For the SP-Model, we consider $p_c = p_a = 10^{-3}$, and for the FD-Model, $p_c = p_a = 10^{-4}$.
When $p_v$ is small, the value of $\bar{\alpha}$ decreases with increasing $p_v$. 
But when $p_v$ becomes too large, we observe an unexpected jump in the $\bar{\alpha}$ values.
The curve part at the right of the jump corresponds to the values $p_v$ for which the functional threshold is given by the inflection point of the error probability curve.
This confirms that when $p_v$ is too large, the functional threshold does not predict accurately which channel parameters lead to a low level of error probability.
Fig.~\ref{fig:ftregions} (b) shows the $\bar{\alpha}$ values as a function of $p_c$.
For the $(3,8)$-code and the FD-Model, we observe that when $p_c$ becomes too large, the functional threshold also fails at predicting the convergence behavior of the faulty decoder.
Finally, Fig.~\ref{fig:ftregions} (c) shows the $\bar{\alpha}$ values as a function of $p_a$.
It confirms that the functional threshold value does not depend on $p_a$.u
This is expected, because the APP computation does not affect the iterative decoding process.
As a consequence, the faulty APP computation only adds noise in the final codeword estimate, but does not make the decoding process fail.

We have seen that when the hardware noise is too high, it leads to a non-standard asymptotic behavior of the decoder in which the functional threshold does not predict accurately the convergence behavior of the faulty decoder.
That is why we modify the functional threshold definition as follows.
\begin{definition}\label{def:admit2}
Denote $\alpha^{\star}$ the functional threshold value obtained from Definition~\ref{def:ft}.
The functional threshold value is restated by setting its value to $\bar{\alpha}$ defined as

\begin{equation}
 \bar{\alpha} = \left\{ \begin{array}{ll}
                        \alpha^{\star} & \text{ if } L\left(P_{e,\nu}^{(+\infty)}, \alpha^{\star}\right)  = +\infty, \\
                        0 & \text{ if } L\left(P_{e,\nu}^{(+\infty)}, \alpha^{\star}\right)  < +\infty .
                       \end{array}
 \right.
\end{equation}

\end{definition}
Definition~\ref{def:admit2} eliminates the decoder noise values which lead to non-desirable behavior of the decoder.
The functional threshold  of Definition~\ref{def:admit2} identifies the channel parameters $\alpha$ which lead to a low level of asymptotic error probability and predicts accurately the convergence behavior of the faulty decoders. 
In this case, the functional threshold can be used as a criterion for the performance comparison of noisy FAIDs.
This criterion will be used in the following for the comparison of FAIDs performance and for the design of robust decoders.

\section{Design of FAIDs Robust to Faulty Hardware}\label{sec:selection}
Based on noisy-DE recursion and on the functional threshold definition, we now propose a method for the design of decoders robust to transient noise introduced by the faulty hardware.
In Section~\ref{sec:FaultyFAID}, we have seen that the FAID framework enables to define a large collection of VNU mappings $\Phi_v$ and thus a large collection of decoders.
The choice of the VNU mapping gives a degree of freedom for optimizing the decoder for a specific constraint.
In~\cite{Planjery_IEEETransCommun_2013}, FAIDs were optimized for low error flor.
Here, we want to optimize FAIDs for robustness to noise introduced by the faulty hardware.

\begin{figure}[t]
\begin{center}
  \subfigure[~]{ \includegraphics[width=.31\linewidth]{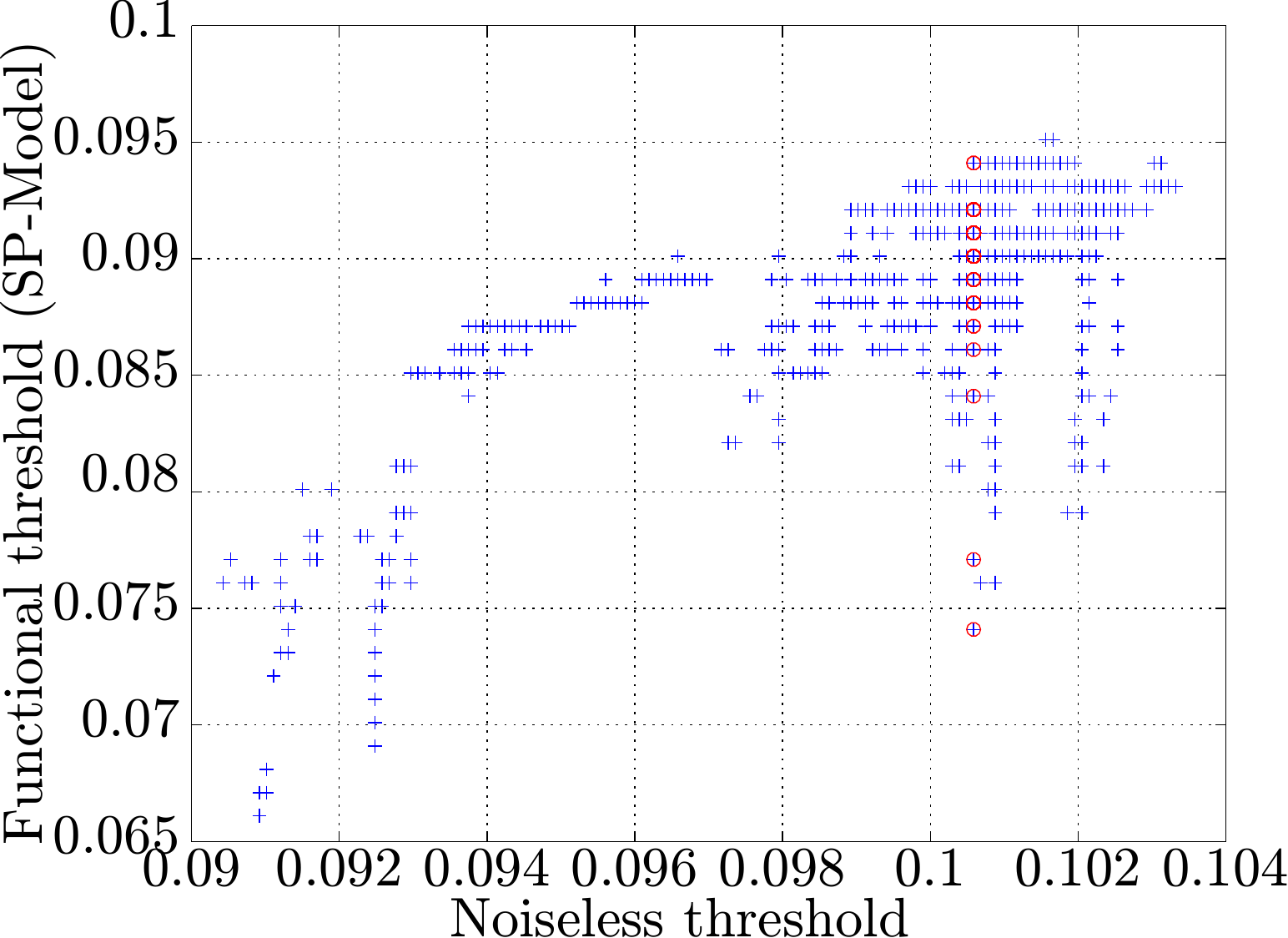}}
  \subfigure[~]{ \includegraphics[width=.31\linewidth]{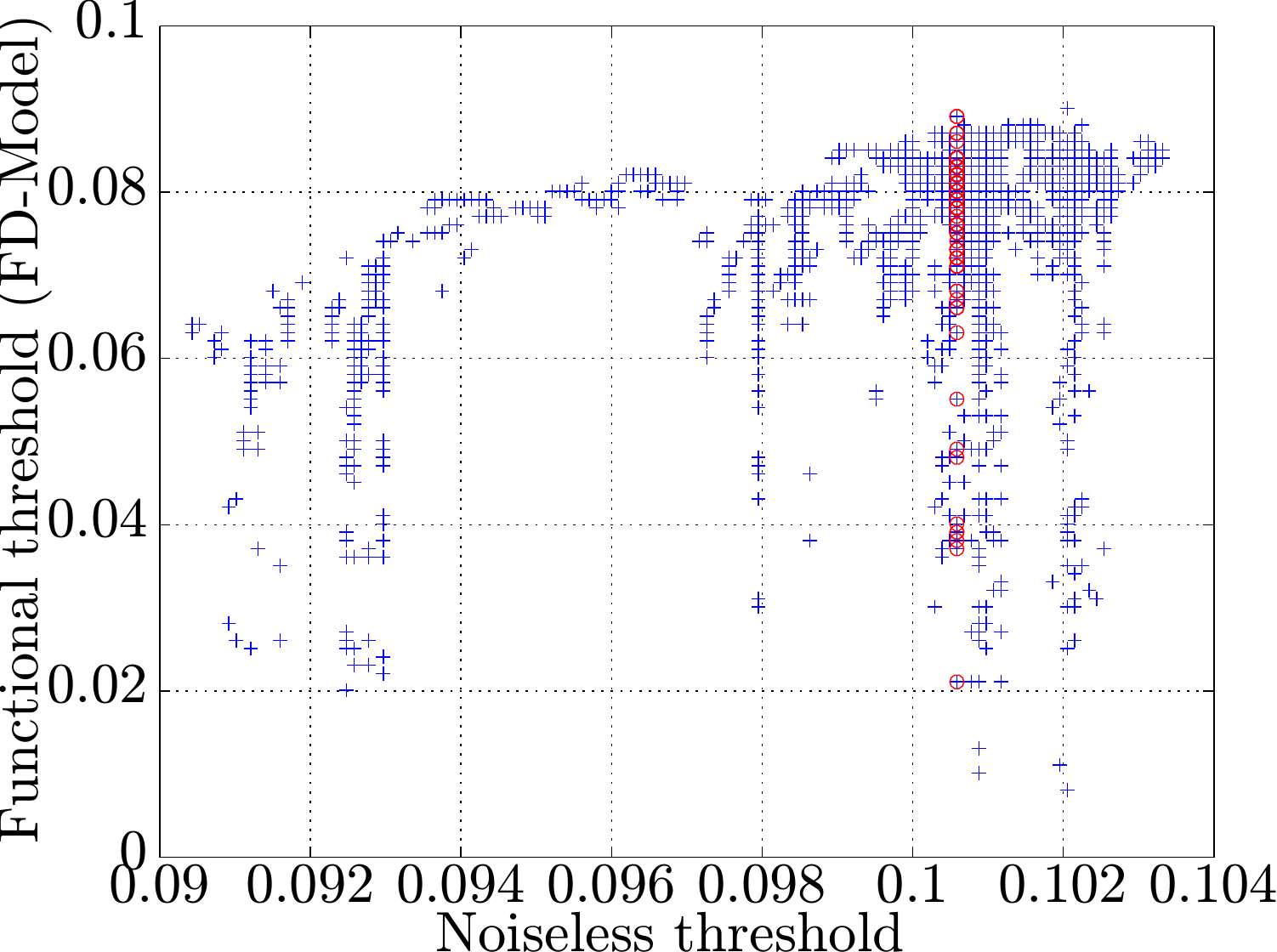}}
  \subfigure[~]{ \includegraphics[width=.31\linewidth]{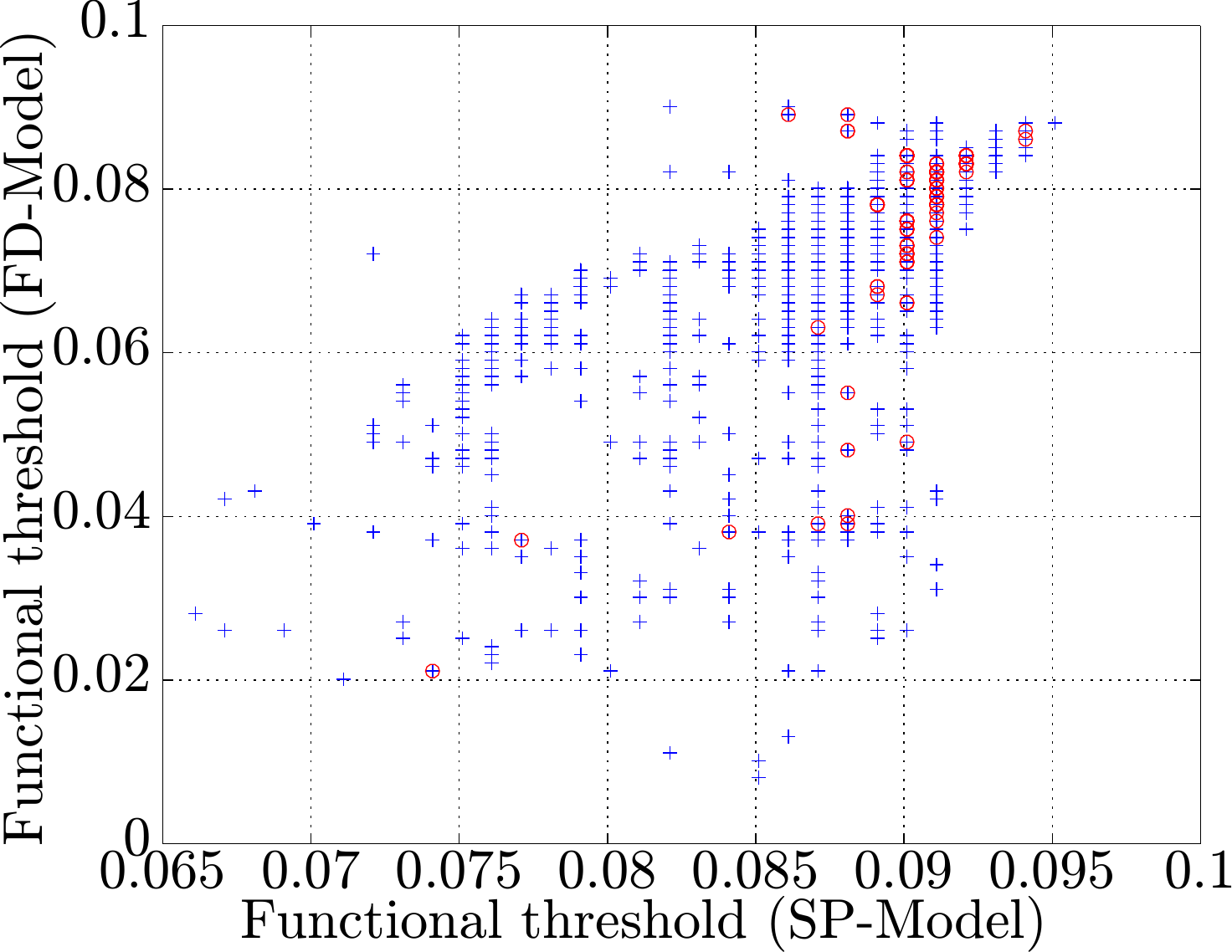}}
\end{center}
\caption{(a) Noiseless thresholds vs functional thresholds for the SP-Model ($p_v = p_c= p_a = 10^{-2}$) , (b) Noiseless thresholds vs functional thresholds for the FD-Model ($p_v=p_c=p_a=5 \times 10^{-3}$)   (c) Functional thresholds for the SP-Model ($p_v = p_c= p_a = 10^{-2}$) vs functional thresholds for the FD-Model ($p_v=p_c=p_a=5 \times 10^{-3}$) }

\label{fig:thres}
\end{figure}

For message alphabet size $N_s=7$, the number of possible FAIDs is equal to $530\,803\,988$, which is too large for a systematic analysis.
Instead, we rely on previous work on FAIDs, and start with a collection of $N_D=5291$ FAIDs which correspond to column-weight tree codes selected from the trapping sets analysis presented in~\cite{Planjery_IEEETransCommun_2013}. 
As a result of this selection process, each of the $N_D$ FAIDs have both good noiseless threshold, and good performance in the error floor.
We now perform a noisy-DE analysis on this set by computing, for each of the $N_D$ FAIDs, the value of their functional threshold.

As an illustration, Fig.~\ref{fig:thres} (a) and (b) represent the functional thresholds with respect to the noiseless thresholds.
For the SP-Model, the functional thresholds are computed for $p_v = p_c= p_a = 10^{-2}$, and for the FD-Model, $p_v = p_c= p_a = 5\times 10^{-3}$.
Although all the considered decoders have good noiseless threshold (between $0.09$ and $0.104$), a wide range of behaviors can be observed when the decoder is faulty.
Indeed, for the SP-Model, the functional threshold values are between $0.065 $ and $0.095$, thus illustrating the existence of both robust and non-robust decoders.
In particular, even decoders with approximately the same noiseless threshold value (e.g. around $0.101$) can exhibit different robustness.
This is even more pronounced for the FD-Model, for which the functional threshold values are between $0.01$ and $0.085$.
These observations illustrate the importance of selecting robust decoders to operate on faulty hardware and that a noiseless analysis is not sufficient to reach any useful conclusion.

We did also a performance comparison with noisy-DE and different error models, and Fig.~\ref{fig:thres} (c) represents the functional thresholds obtained for the FD-Model (for $p_v = p_c= p_a = 5\times 10^{-3}$) with respect to the functional thresholds obtained for the SP-Model (for $p_v = p_c= p_a = 10^{-2}$).
In this case also a large variety of behaviors can be observed.
Indeed, only a small number of decoders are robust to both error models,  while some of them are robust only to the SP-Model, and some others only to the FD-Model.
This suggests that robustness to different error models may require different decoders.

Following these observations, we have selected four decoders from the set of $N_D$ FAIDs.
The first two ones denoted $\Phi^{(v,\text{SP})}_{\mbox{\tiny robust}}$ and $\Phi^{(v,\text{FD})}_{\mbox{\tiny robust}}$ are the decoderd have been selected such as to minimize discrepancy between noiseless and functional thresholds, for the SP-Model and the FD-Model respectively.
Two other FAIDs $\Phi^{(v,\text{SP})}_{\mbox{\tiny non-robust}}$ and $\Phi^{(v,\text{FD})}_{\mbox{\tiny non-robust}}$ are selected to maximize the difference between noiseless and functional thresholds respectively for the SP-Model and for the FD-Model.
The LUTs of $\Phi^{(v,\text{SP})}_{\mbox{\tiny robust}}$ and $\Phi^{(v,\text{SP})}_{\mbox{\tiny non-robust}}$ are given in Table~\ref{tab:FAID_2075842} and Table~\ref{tab:FAID_941001}, and the LUTs of  $\Phi^{(v,\text{FD})}_{\mbox{\tiny robust}}$ and $\Phi_{v,\text{FD}}^{\mbox{\tiny (non-robust)}}$  are given in Table~\ref{tab:FAID_3161} and Table~\ref{tab:FAID_3888}.
The four decoders will be considered in the following section to validate the asymptotic noisy-DE results with finite-length simulations.

\begin{table*}[t]
\parbox{.48\linewidth}{%
\caption{FAID rule $\Phi^{(v,\text{SP})}_{\mbox{\tiny robust}}$ robust to the faulty Hardware (SP-Model)}
\centering 
\resizebox{8cm}{!}{
	\begin{tabular}{|c||c|c|c|c|c|c|c|}
	\hline
		\boldmath $m_{1}/m_{2}$         & \boldmath$-L_3$       &\boldmath$-L_2$        & \boldmath$-L_1$        & \boldmath $0$        & \boldmath$+L_1$    & \boldmath $+L_2$ & \boldmath $+L_3$\\ \hline\hline
		\boldmath$-L_3$                 &       $-L_3$    &       $-L_3$    &       $-L_3$    &       $-L_3$     &      $-L_3$     &      $-L_2$     &      $ 0$  \\ \hline
		\boldmath$-L_2$                 &       $-L_3$    &       $-L_3$    &       $-L_3$    &       $-L_3$     &      $-L_2$     &      $-L_2$     &      $L_1$  \\ \hline
		\boldmath$-L_1$                 &       $-L_3$    &       $-L_3$    &       $-L_3$    &       $-L_2$     &      $-L_1$     &      $-L_1$     &      $L_1$  \\ \hline
		\boldmath $0$                   &       $-L_3$    &       $-L_3$    &       $-L_2$    &       $-L_1$     &      $-L_1$     &      $ 0$     &      $L_1$  \\ \hline
		\boldmath $+L_1$                &       $-L_3$    &       $-L_2$    &       $-L_1$    &       $-L_1$     &      $ 0$     &      $L_1$     &      $L_2$  \\ \hline
		\boldmath $+L_2$                &       $-L_2$    &       $-L_2$    &       $-L_1$    &       $0$     &      $L_1$     &      $L_2$     &      $L_2$  \\ \hline
		\boldmath $+L_3$                &       $0$     &       $L_1$    &       $L_1$    &       $L_1$     &      $L_2$     &      $L_2$     &      $L_3$  \\ \hline
	\end{tabular}}
\label{tab:FAID_2075842}}%
\hfill
\parbox{.48\linewidth}{%
\caption{FAID rule $\Phi^{(v,\text{SP})}_{\mbox{\tiny non-robust}}$ not robust to faulty Hardware (SP-Model)}
\resizebox{8cm}{!}{
	\begin{tabular}{|c||c|c|c|c|c|c|c|}
	\hline
		\boldmath $m_{1}/m_{2}$         & \boldmath$-L_3$       &\boldmath$-L_2$        & \boldmath$-L_1$        & \boldmath $ 0 $        & \boldmath$+L_1$    & \boldmath $+L_2$ & \boldmath $+L_3$\\ \hline\hline
		\boldmath$-L_3$                 &       $-L_3$    &       $-L_3$    &       $-L_3$    &       $-L_3$     &      $-L_3$     &      $-L_3$     &      $ 0$  \\ \hline
		\boldmath$-L_2$                 &       $-L_3$    &       $-L_3$    &       $-L_3$    &       $-L_3$     &      $-L_2$     &      $ 0$     &      $L_2$  \\ \hline
		\boldmath$-L_1$                 &       $-L_3$    &       $-L_3$    &       $-L_2$    &       $-L_2$     &      $-L_1$     &      $ 0$     &      $L_2$  \\ \hline
		\boldmath $0$                   &       $-L_3$    &       $-L_3$    &       $-L_2$    &       $-L_1$     &      $ 0$     &      $L_1$     &      $L_3$  \\ \hline
		\boldmath $+L_1$                &       $-L_3$    &       $-L_2$    &       $-L_1$    &       $ 0$     &      $ 0$     &      $L_1$     &      $L_3$  \\ \hline
		\boldmath $+L_2$                &       $-L_3$    &       $0$     &       $0$     &       $L_1$     &      $L_1$     &      $L_1$     &      $L_3$  \\ \hline
		\boldmath $+L_3$                &       $0$    &       $L_2$    &       $L_2$    &       $L_3$     &      $L_3$     &      $L_3$     &      $L_3$  \\ \hline
	\end{tabular}}
\label{tab:FAID_941001}}%
\end{table*}

\begin{table*}[t]
\parbox{.48\linewidth}{%
\caption{FAID rule $\Phi^{(v,\text{FD})}_{\mbox{\tiny (robust)}}$ robust to the faulty Hardware (FD-Model)}
\centering 
\resizebox{8cm}{!}{
	\begin{tabular}{|c||c|c|c|c|c|c|c|}
	\hline
		\boldmath $m_{1}/m_{2}$         & \boldmath$-L_3$       &\boldmath$-L_2$        & \boldmath$-L_1$        & \boldmath $0$        & \boldmath$+L_1$    & \boldmath $+L_2$ & \boldmath $+L_3$\\ \hline\hline
		\boldmath$-L_3$                 &       $-L_3$    &       $-L_3$    &       $-L_3$    &       $-L_3$     &      $-L_3$     &      $-L_1$     &      $ 0$  \\ \hline
		\boldmath$-L_2$                 &       $-L_3$    &       $-L_3$    &       $-L_3$    &       $-L_3$     &      $-L_1$     &      $-L_1$     &      $L_2$  \\ \hline
		\boldmath$-L_1$                 &       $-L_3$    &       $-L_3$    &       $-L_2$    &       $-L_2$     &      $-L_1$     &      $0$     &      $L_2$  \\ \hline
		\boldmath $0$                   &       $-L_3$    &       $-L_3$    &       $-L_2$    &       $-L_1$     &      $0$     &      $ 0$     &      $L_3$  \\ \hline
		\boldmath $+L_1$                &       $-L_3$    &       $-L_1$    &       $-L_1$    &       $0$     &      $ 0$     &      $L_1$     &      $L_3$  \\ \hline
		\boldmath $+L_2$                &       $-L_1$    &       $-L_1$    &       $0$    &       $0$     &      $L_1$     &      $L_1$     &      $L_3$  \\ \hline
		\boldmath $+L_3$                &       $0$     &       $L_2$    &       $L_2$    &       $L_3$     &      $L_3$     &      $L_3$     &      $L_3$  \\ \hline
	\end{tabular}}
\label{tab:FAID_3161}}%
\hfill
\parbox{.48\linewidth}{%
\caption{FAID rule $\Phi^{(v,\text{FD})}_{\mbox{\tiny non-robust}}$ not robust to faulty Hardware (FD-Model)}
\resizebox{8cm}{!}{
	\begin{tabular}{|c||c|c|c|c|c|c|c|}
	\hline
		\boldmath $m_{1}/m_{2}$         & \boldmath$-L_3$       &\boldmath$-L_2$        & \boldmath$-L_1$        & \boldmath $ 0 $        & \boldmath$+L_1$    & \boldmath $+L_2$ & \boldmath $+L_3$\\ \hline\hline
		\boldmath$-L_3$                 &       $-L_3$    &       $-L_3$    &       $-L_3$    &       $-L_3$     &      $-L_2$     &      $-L_2$     &      $ 0$  \\ \hline
		\boldmath$-L_2$                 &       $-L_3$    &       $-L_3$    &       $-L_3$    &       $-L_3$     &      $-L_2$     &      $ -L_1$     &      $L_2$  \\ \hline
		\boldmath$-L_1$                 &       $-L_3$    &       $-L_3$    &       $-L_2$    &       $-L_2$     &      $-L_1$     &      $ 0$     &      $L_2$  \\ \hline
		\boldmath $0$                   &       $-L_3$    &       $-L_3$    &       $-L_2$    &       $-L_1$     &      $ 0$     &      $0$     &      $L_3$  \\ \hline
		\boldmath $+L_1$                &       $-L_2$    &       $-L_2$    &       $-L_1$    &       $ 0$     &      $ 0$     &      $L_1$     &      $L_3$  \\ \hline
		\boldmath $+L_2$                &       $-L_2$    &       $-L_1$     &       $0$     &       $0$     &      $L_1$     &      $L_1$     &      $L_3$  \\ \hline
		\boldmath $+L_3$                &       $0$    &       $L_2$    &       $L_2$    &       $L_3$     &      $L_3$     &      $L_3$     &      $L_3$  \\ \hline
	\end{tabular}}
\label{tab:FAID_3888}}%
\end{table*}

\section{Finite Length Simulations Results}\label{sec:results}
This section gives finite-length simulation results with the FAIDs $\Phi^{(v,\text{SP})}_{\mbox{\tiny robust}}$, $\Phi^{(v,\text{SP})}_{\mbox{\tiny non-robust}}$, $\Phi^{(v,\text{FD})}_{\mbox{\tiny robust}}$, and $\Phi^{(v,\text{FD})}_{\mbox{\tiny non-robust}}$ that have been identified by the noisy-DE analysis.
%
For the sake of comparison, a fifth decoder denoted $\Phi^{(v)}_{\mbox{\tiny (opt)}}$ (Table~\ref{tab:mapping}) will also be considered.
$\Phi^{(v)}_{\mbox{\tiny opt}}$ has been optimized in~\cite{Planjery_IEEETransCommun_2013} for noiseless decoding with low error floor.
In our simulations, the number of iterations is set to $100$ and we consider the $(155,93)$ Tanner code with degrees $(d_v=3,d_c=5)$ given in~\cite{tanner2004ldpc}.

Fig.~\ref{fig:flnoisy} (a) represents the Bit Error Rates (BER) with respect to channel parameter $\alpha$ and for the SP-Model.
In the case of noiseless decoding, as $\Phi^{(v)}_{\mbox{\tiny opt}}$ has been optimized for low error floor, it performs better, as expected, than $\Phi^{(v,\text{SP})}_{\mbox{\tiny robust}}$ and $\Phi^{(v,\text{SP})}_{\mbox{\tiny non-robust}}$. 
But as $\Phi^{(v,\text{SP})}_{\mbox{\tiny robust}}$ and $\Phi^{(v,\text{SP})}_{\mbox{\tiny non-robust}}$ belong to a predetermined set of good FAID decoders, they also have good performance in the noiseless case.

We now discuss the faulty decoding case.
For the SP-Model, we fix $p_v = p_c = p_a = 0.05$, and for the FD-Model,  $p_v = p_c = p_a = 0.02$.
We first see that the lower bound conditions of Proposition~\ref{prop:lower-boundsFAID} are not satisfied here.
Indeed, in our simulations, we considered an early stopping criterion, which halts the decoding process when the sequence estimated by the APP block is a codeword, while the  results of Proposition~\ref{prop:lower-boundsFAID} consider the averaged error probabilities at a fixed iteration number, and thus do not take into account the stopping criterion. 
We then see that the results are in compliance with the conclusions of the functional thresholds analysis.
Indeed, when the decoder is faulty, $\Phi^{(v,\text{SP})}_{\mbox{\tiny robust}}$ performs better than $\Phi^{(v)}_{\mbox{\tiny opt}}$ while $\Phi^{(v,\text{SP})}_{\mbox{\tiny non-robust}}$ has a significant performance loss compared to the two other decoders.
From Fig.~\ref{fig:flnoisy} (b) we see that the same holds for the FD-Model in which case the error correction performance of the faulty decoders are much worse than for the SP-Model.
The FD-Model makes decoders less robust to noise than the SP-Model, because with the FD-Model, not only the amplitudes, but also the signs of the messages can be corrupted by the noise.
In particular, the non-robust decoder $\Phi^{(v,\text{FD})}_{\mbox{\tiny non-robust}}$ performs extremely poorly.

We now comment the results of Fig.~\ref{fig:compspfd}.
The code and decoder noise parameters are the same as before.
In Fig.~\ref{fig:compspfd}, the FD-Model with $p_v = p_c = p_a = 0.02$ is applied to $\Phi^{(v,\text{SP})}_{\mbox{\tiny robust}}$ and $\Phi^{(v,\text{FD})}_{\mbox{\tiny robust}}$, and the SP-Model with $p_v = p_c = p_a = 0.05$ is also applied to $\Phi^{(v,\text{SP})}_{\mbox{\tiny robust}}$ and $\Phi^{(v,\text{FD})}_{\mbox{\tiny robust}}$.
We see that $\Phi^{(v,\text{SP})}_{\mbox{\tiny robust}}$ is robust for the SP-Model but not-robust for the FD-Model and that $\Phi^{(v,\text{FD})}_{\mbox{\tiny robust}}$ is robust for the FD-Model but not-robust for the SP-Model.
These results are in compliance with the asymptotic analysis of Section~\ref{sec:selection} which shows that some decoders that are robust for one model are not necessarily robust for the other one.

To conclude, the finite-length simulations confirm that the functional threshold can be used to predict the performance of faulty decoders.
Both the asymptotic analysis and the finite-length results demonstrate the existence of robust and non-robust decoders.
They both illustrate the importance of designing robust decoders for faulty hardware and show that the design of robust decoders is dependant on the hardware error model.

\begin{figure}[t]
\begin{center}
  \subfigure[~]{ \includegraphics[width=.48\linewidth]{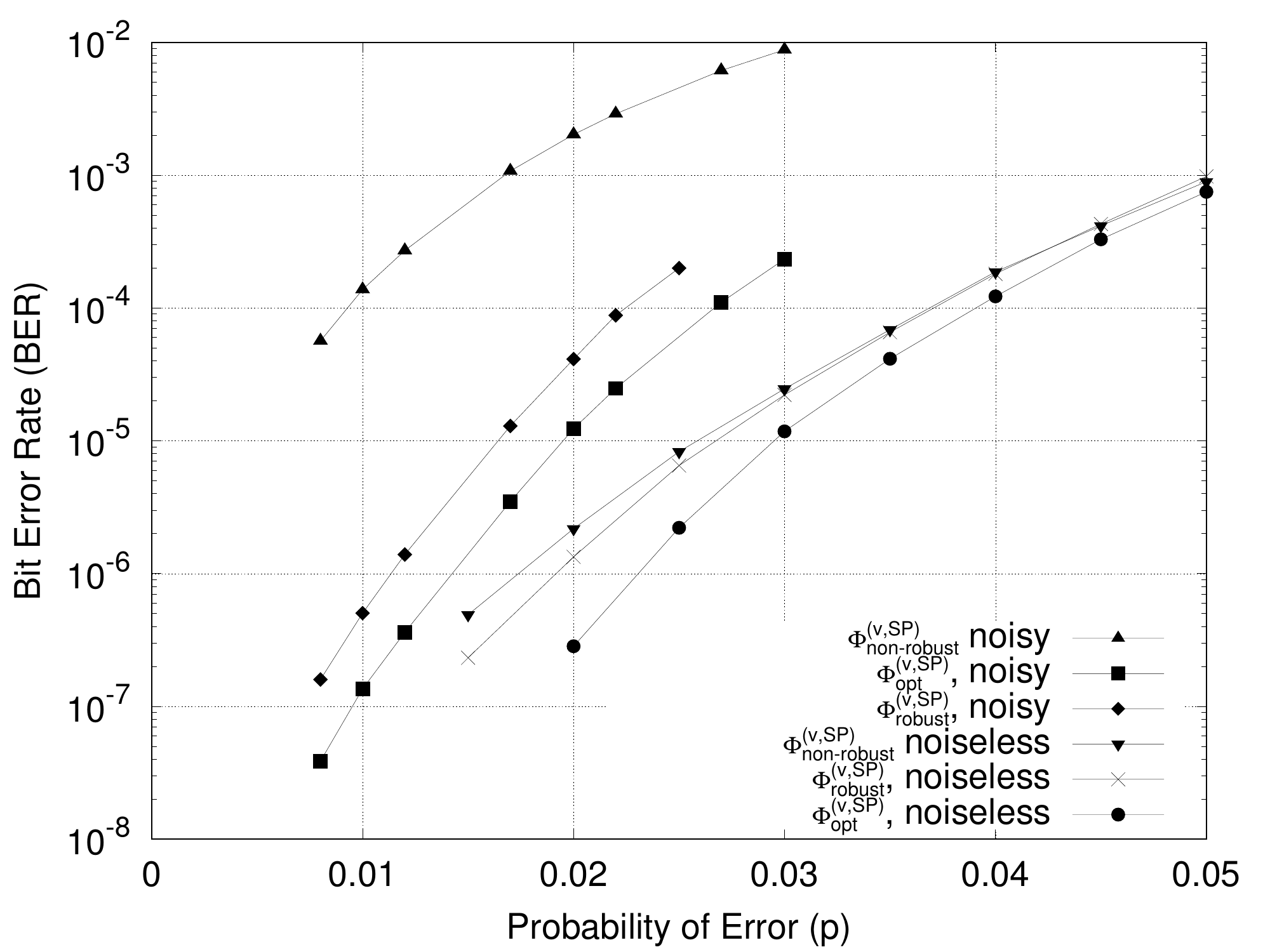}}
  \subfigure[~]{ \includegraphics[width=.48\linewidth]{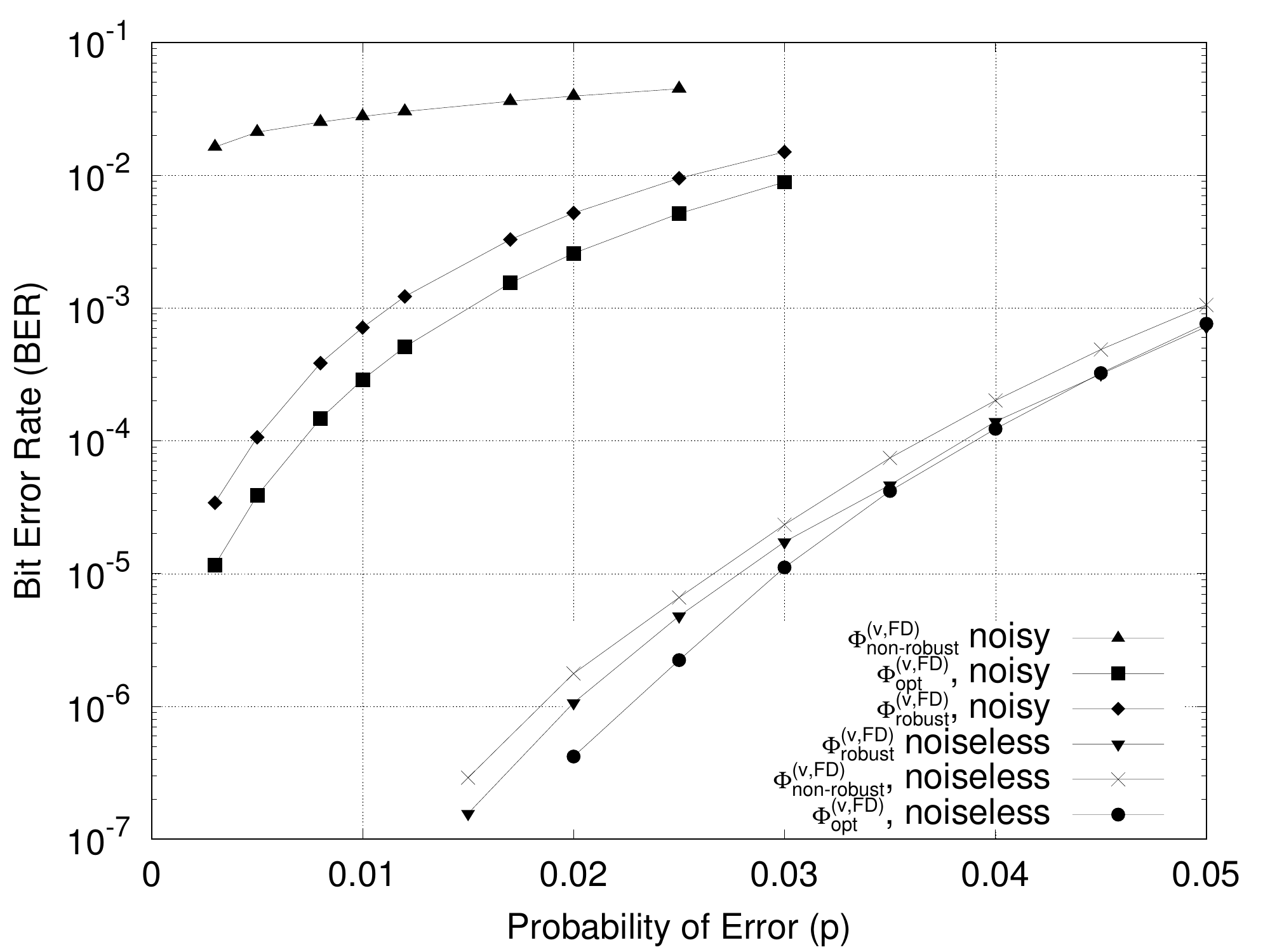}}
\end{center}
\caption{ (155, 93) Tanner Code, $d_v=3$, $d_c=5$, $100$ iterations, ~ (a) BER for the SP-Model, with $p_v = p_c = p_a = 0.05$, ~ (b)~BER for the FD-Model, with $p_v = p_c = p_a = 0.02$}
\label{fig:flnoisy}
\end{figure}

\begin{figure}
 \begin{center}
   \includegraphics[width=.48\linewidth]{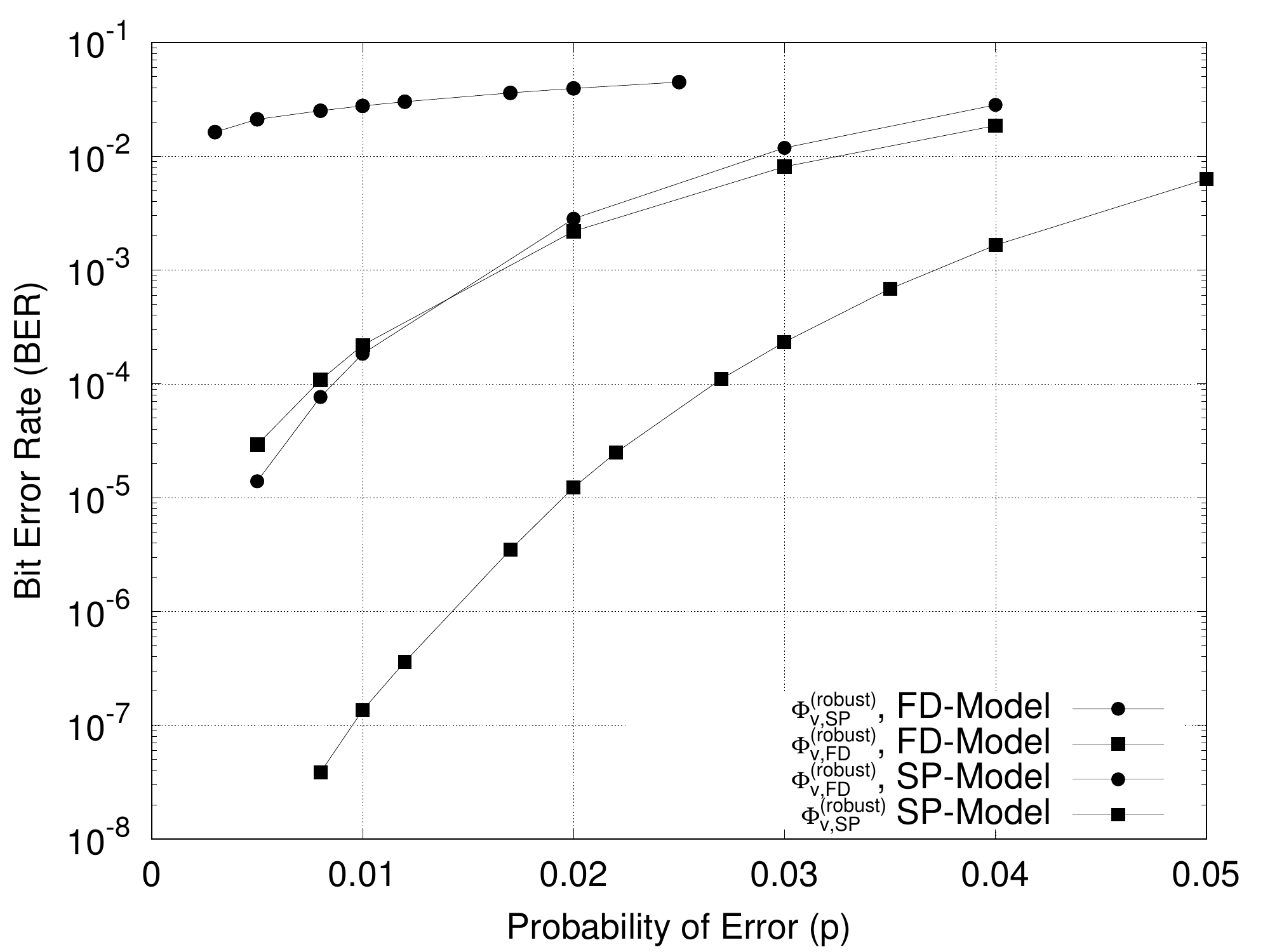}
 \end{center}
\caption{ (155, 93) Tanner Code, $d_v=3$, $d_c=5$, $100$ iterations, ~ $p_v = p_c = p_a = 0.05$ (SP-Model) and $p_v = p_c = p_a = 0.02$ (FD-Model) ~ For the legend, \emph{e.g.}, $\Phi_v^{(\text{robust,SP})}~ \text{FD} $ is the decoder robust for the SP-Model applied to the FD-Model }
\label{fig:compspfd}
\end{figure}

\section{Conclusion}\label{sec:conclusion}
In this paper, we performed an asymptotic performance analysis of noisy FAIDs using noisy-DE. 
We provided an analysis of the behavior of the functional threshold and showed that under restricted noise conditions, it enables to predict the asymptotic behavior of noisy FAIDs.
From this asymptotic analysis, we illustrated the existence of a wide variety of decoders robustness behaviors, and proposed a framework for the design of inherently robust decoders.
The finite-length simulations illustrated the gain in performance when considering robust decoders.


\appendix
The proof of Theorem~\ref{th:pe} follows the same steps as the proof of~\cite[Theorem 2]{li09IT}.
We first show that the symmetry is retained under faulty VN and CN processing.
We then show that the decoder error probability does not depend on the transmitted codeword.
For the sake of simplicity, the representation $0 \rightarrow 1$ and $1 \rightarrow -1$ is considered in the proof.
The all-zero codeword thus becomes the all-one codeword.

\subsection{Symmetry of the Faulty Iterative Processing}
Consider the two setups
\begin{enumerate}
 \item \emph{Setup 1:} The codeword $\mathbf{a} = [a_1,\dots,a_n]$, where $a_i \in \{-1,+1\}$, was transmitted, and the sequence $\mathbf{y} = [y_1,\dots, y_n]$ was received by the decoder.
 \item \emph{Setup 2:} The codeword $\mathbf{1} = [1,\dots, 1]$ was transmitted, and the sequence $\mathbf{a}. \mathbf{y} = [a_1 y_1, \dots, a_n y_n] $ was received.
\end{enumerate}
For Setup 1, denote $\mvtoc_{i,j}^{(\ell)}$ the message from a VN $i$ to a CN $j$ at iteration $\ell$ and denote $\mctov_{i,j}^{(\ell)}$ the message from a CN $j$ to a VN $i$ at iteration $\ell$.
Also denote $\gamma_{i}^{(\ell)}$ the APP message computed at node $i$ at iteration $\ell$.
We want to show that at any iteration $\ell$,
\begin{equation}	
P(\gamma_{i}^{(\ell)}|\mathbf{x}=\mathbf{a},\mathbf{y}) = P(a_i \gamma_{i}^{(\ell)}|\mathbf{x}=\mathbf{1},\mathbf{a}.\mathbf{y}) .
\end{equation}
The proof is made by recursion on the $\mvtoc_{i,j}^{(\ell)}$ and the $\mctov_{i,j}^{(\ell)}$.


\subsubsection{Initial messages}
The initial messages from VN $i$ to CN $j$ all verify
\begin{equation}
 P(\mvtoc_{i,j}^{(0)}|x_i=a_i,y_i)  = P(a_i \mvtoc_{i,j}^{(0)}|x_i=1,a_i y_i)
\end{equation}
by the channel symmetry~\cite[Definition 2]{varshney2011performance}.

\subsubsection{Check Node processing}
Assume that at iteration $\ell$, the condition
\begin{equation}\label{eq:condrec}
 P(\mvtoc_{i,j}^{(\ell)}|\mathbf{x}=\mathbf{a},\mathbf{y})  = P(a_i \mvtoc_{i,j}^{(\ell)}|\mathbf{x}=\mathbf{1},\mathbf{a}.\mathbf{y}) .
\end{equation}
is verified.
Then at any CN $j$,
\begin{equation}
 P(\mctov_{i,j}^{(\ell)}|\mathbf{x}=\mathbf{a},\mathbf{y}) = \sum_{\boldsymbol{\mvtoc}_j^{(\ell)}} \mathbf{P}^{(c)}(\mctov_{i,j}^{(\ell)}|\boldsymbol{\mvtoc}_j^{(\ell)}) \prod_{k=1}^{d_c-1} P(\mvtoc_{k,j}^{(\ell)}|\mathbf{x}=\mathbf{a},\mathbf{y})
\end{equation}
where $\boldsymbol{\mvtoc}_j^{(\ell)} = [\mvtoc_{1,j}^{(\ell)}, \dots, \mvtoc_{d_c-1,j}^{(\ell)}]$ is the set of VN messages incoming to the CN $j$.
The equality holds because the $\mvtoc_{k,j}^{(\ell)}$ are independent random variables.
Then, 
\begin{equation}
 P(\mctov_{i,j}^{(\ell)}|\mathbf{x}=\mathbf{a},\mathbf{y}) = \sum_{\boldsymbol{\mvtoc}_j^{(\ell)}} \mathbf{P}^{(c)}(a_j \mctov_{i,j}^{(\ell)}|\mathbf{a} \boldsymbol{\mvtoc}_j^{(\ell)})  \prod_{k=1}^{d_c-1} P(a_k \mvtoc_{k,j}^{(\ell)}|\mathbf{x}=1,\mathbf{a}.\mathbf{y})
\end{equation}
from~\eqref{eq:randcnusym},~\eqref{eq:condrec}, and $\prod_{k=1}^{d_c-1} a_k = a_j$.
By the variable change $\mvtoc_{k,j}'^{(\ell)} = a_k \mvtoc_{k,j}^{(\ell)}$, we finally get
\begin{align}\notag
 P(\mctov_{i,j}^{(\ell)}|\mathbf{x}=\mathbf{a},\mathbf{y}) &  = \sum_{\boldsymbol{\mvtoc'}_j^{(\ell)}} \mathbf{P}^{(c)}(a_j \mctov_{i,j}^{(\ell)}|\boldsymbol{\mvtoc'}_j^{(\ell)})  \prod_{k=1}^{d_c-1} P( \mvtoc_{k,j}'^{(\ell)}|\mathbf{x}=1,\mathbf{a} \mathbf{y}) \\ \label{eq:recCN}
							   & = P(a_j \mctov_{i,j}^{(\ell)}|\mathbf{x}=\mathbf{1},\mathbf{a}.\mathbf{y}) .
\end{align}

\subsubsection{Variable Node processing}\label{sec:VNproc}
At any VN $i$,
\begin{equation}
 P(\mvtoc_{i,j}^{(\ell+1)}|\mathbf{x}=\mathbf{a},\mathbf{y}) = \sum_{\boldsymbol{\mctov}_i^{(\ell)}} \mathbf{P}^{(v)}(\mvtoc_{i,j}^{(\ell+1)}|\boldsymbol{\mctov}_i^{(\ell)}) \prod_{j=1}^{d_v-1} P(\mctov_{i,j}^{(\ell)}|\mathbf{x}=\mathbf{a},\mathbf{y})
\end{equation}
where  $\boldsymbol{\mctov}_i^{(\ell)} = [\mctov_{i,1}^{(\ell)}, \dots, \mctov_{i,d_v-1}^{(\ell)}]$ is the set of CN messages incoming to the VN $i$.
Then
\begin{equation}
 P(\mvtoc_{i,j}^{(\ell+1)}|\mathbf{x}=\mathbf{a},\mathbf{y}) =  \sum_{\boldsymbol{\mctov}_i^{(\ell)}} \mathbf{P}^{(v)}(a_i \mvtoc_{i,j}^{(\ell+1)}|a_i \boldsymbol{\mctov}_i^{(\ell)}) \prod_{j=1}^{d_v-1} P(a_i \mctov_{i,j}^{(\ell)}|\mathbf{x}=\mathbf{1},\mathbf{a}.\mathbf{y})
\end{equation}
from~\eqref{eq:randvnusym} and~\eqref{eq:recCN}.
By the variable change $\mctov_{i,j}'^{(\ell)} = a_i \mctov_{i,j}^{(\ell)}$, we get
\begin{align}\notag
 P(\mvtoc_{i,j}^{(\ell+1)}|\mathbf{x}=\mathbf{a},\mathbf{y}) & =  \sum_{\boldsymbol{\mctov'}_i^{(\ell)}} \mathbf{P}^{(v)}(a_i \mvtoc_{i,j}^{(\ell+1)}|\boldsymbol{\mctov'}_i^{(\ell)}) \prod_{j=1}^{d_v-1} P({\mctov'}_{i,j}^{(\ell)}|\mathbf{x}=\mathbf{1},\mathbf{a}.\mathbf{y}) \\ \label{eq:recVN}
 & =  P(a_i \mvtoc_{i,j}^{(\ell+1)}|\mathbf{x}=\mathbf{1},\mathbf{a}.\mathbf{y})
\end{align}
which shows the recursion of~\eqref{eq:condrec}.

\subsubsection{APP processing}
At any VN $i$,
\begin{equation}\label{eq:recAPP}
 P(\gamma_{i}^{(\ell)}|\mathbf{x}=\mathbf{a},\mathbf{y}) = P(a_i \gamma_{i}^{(\ell)}|\mathbf{x}=\mathbf{1},\mathbf{a}.\mathbf{y}) .
\end{equation}
The proof is obtained from the previous recursion on VN and CN processing, and following the steps of VN processing.

\subsection{Error Probability}
We now show that the error probabilities of Setup 1 and Setup 2 are equal.

\subsubsection{Error probability at node $i$}
For Setup 1, the error probability at VN $i$ conditionally to $\mathbf{y}$ is
\begin{equation}
 P_{e,i}^{(\ell)}(\mathbf{x} = \mathbf{a}, \mathbf{y}) = \int_{\Omega_i} P(\gamma_{i}^{(\ell)}|\mathbf{x}=\mathbf{a},\mathbf{y}) d\gamma_{i}^{(\ell)}
\end{equation}
where $\Omega_i = \mathbb{R}^{-}$ if $a_i=1$, and $\Omega_i = \mathbb{R}^{+}$ if $a_i=-1$.
Then, from~\eqref{eq:recAPP},
\begin{equation}
 P_{e,i}^{(\ell)}(\mathbf{x} = \mathbf{a}, \mathbf{y}) = \int_{\Omega_i} P(a_i \gamma_{i}^{(\ell)}|\mathbf{x}=\mathbf{1},\mathbf{a}.\mathbf{y}) d\gamma_{i}^{(\ell)} .
\end{equation}
By variable change $ {\gamma'}_{i}^{(\ell)} = a_i \gamma_{i}^{(\ell)} $, we get
\begin{equation}\label{eq:indPe} 
  P_{e,i}^{(\ell)}(\mathbf{x} =\mathbf{a}, \mathbf{y}) = \int_{\mathbb{R}^{-}} P({\gamma'}_{i}^{(\ell)}|\mathbf{x}=\mathbf{1},\mathbf{a}. \mathbf{y}) d{\gamma'}_{i}^{(\ell)} 
  =  P_{e,i}^{(\ell)}(\mathbf{1},\mathbf{x} = \mathbf{a}.\mathbf{y}) .
\end{equation}

\subsubsection{Error probability}
The error probability of Setup $1$ is given by
\begin{equation}
 P_{e}^{(\ell)}(\mathbf{a}) = E_{i,\mathbf{y}}\left[ P_{e,i}^{(\ell)}(\mathbf{x} =\mathbf{a}, \mathbf{y}) \right] = E_{i,\mathbf{y}}\left[ P_{e,i}^{(\ell)}(\mathbf{x} =\mathbf{1}, \mathbf{a}.\mathbf{y}) \right] .
\end{equation}
By the variable change $\mathbf{y}' =\mathbf{a}\mathbf{y}$, we get
\begin{equation}
 P_{e}^{(\ell)}(\mathbf{a}) = E_{i,\mathbf{y'}}\left[ P_{e,i}^{(\ell)}(\mathbf{x} =\mathbf{1},\mathbf{y'}) \right]
\end{equation}
and
\begin{equation}
  P_{e}^{(\ell)}(\mathbf{a}) =  P_{e}^{(\ell)}(\mathbf{1})
\end{equation}
which concludes the proof.


\bibliographystyle{IEEEtran}
\bibliography{biblio}

\begin{thebibliography}{10}
\providecommand{\url}[1]{#1}
\csname url@samestyle\endcsname
\providecommand{\newblock}{\relax}
\providecommand{\bibinfo}[2]{#2}
\providecommand{\BIBentrySTDinterwordspacing}{\spaceskip=0pt\relax}
\providecommand{\BIBentryALTinterwordstretchfactor}{4}
\providecommand{\BIBentryALTinterwordspacing}{\spaceskip=\fontdimen2\font plus
\BIBentryALTinterwordstretchfactor\fontdimen3\font minus
  \fontdimen4\font\relax}
\providecommand{\BIBforeignlanguage}[2]{{%
\expandafter\ifx\csname l@#1\endcsname\relax
\typeout{** WARNING: IEEEtran.bst: No hyphenation pattern has been}%
\typeout{** loaded for the language `#1'. Using the pattern for}%
\typeout{** the default language instead.}%
\else
\language=\csname l@#1\endcsname
\fi
#2}}
\providecommand{\BIBdecl}{\relax}
\BIBdecl

\bibitem{Kameni14Coms}
C.~K. Ngassa, V.~Savin, E.Dupraz, and D.~Declercq, ``{Density Evolution and
  Functional Threshold for the Noisy Min-Sum Decoder},'' \emph{Submitted to
  IEEE Transactions on Communications}, May 2014.

\bibitem{ngassa14ITA}
C.~K. Ngassa, V.~Savin, and D.~Declercq, ``Unconventional behavior of the noisy
  min-sum decoder over the binary symmetric channel,'' in \emph{Information
  Theory and Applications Workshop}, Feb. 2014, pp. 1--10.

\bibitem{von1956probabilistic}
J.~V. Neumann, \emph{Probabilistic Logics and the Synthesis of Reliable
  Organisms from Unreliable Components}, ser. Automata Studies.\hskip 1em plus
  0.5em minus 0.4em\relax Princeton: Princeton University Press, 1956, pp.
  43--98.

\bibitem{gacs1994lower}
P.~G{\'a}cs and A.~G{\'a}l, ``Lower bounds for the complexity of reliable
  boolean circuits with noisy gates,'' \emph{IEEE Transactions on Information
  Theory}, vol.~40, no.~2, pp. 579--583, March 1994.

\bibitem{dobrushin1977upper}
R.~Dobrushin and S.~Ortyukov, ``Upper bound on the redundancy of
  self-correcting arrangements of unreliable functional elements,''
  \emph{Problemy Peredachi Informatsii}, vol.~13, no.~3, pp. 56--76, 1977.

\bibitem{pippenger1985networks}
N.~Pippenger, ``On networks of noisy gates,'' in \emph{26th Annual Symposium on
  Foundations of Computer Science}, Oct. 1985, pp. 30--38.

\bibitem{taylor1968reliablestorage}
M.~Taylor, ``Reliable information storage in memories designed from unreliable
  components,'' \emph{Bell System Technical Journal}, vol.~47, pp. 2299--2337,
  Dec. 1968.

\bibitem{kuznetsov1973information}
A.~Kuznetsov, ``Information storage in a memory assembled from unreliable
  components,'' \emph{Problems of Information Transmission}, vol.~9, pp.
  254--264, 1973.

\bibitem{vasic2007information}
B.~Vasic and S.~Chilappagari, ``An information theoretical framework for
  analysis and design of nanoscale fault-tolerant memories based on low-density
  parity-check codes,'' \emph{IEEE Transactions Circuits Systems I, Regular
  Papers}, vol.~54, no.~11, pp. 2438--2446, Nov. 2007.

\bibitem{chilappagari2006analysis}
S.~Chilappagari, M.~Ivkovic, and B.~Vasic, ``Analysis of one step majority
  logic decoders constructed from faulty gates,'' in \emph{IEEE International
  Symposium on Information Theory}, July 2006, pp. 469--473.

\bibitem{varshney2011performance}
L.~Varshney, ``Performance of {LDPC} codes under faulty iterative decoding,''
  \emph{IEEE Transactions on Information Theory}, vol.~57, no.~7, pp.
  4427--4444, July 2011.

\bibitem{huang2013analysis}
C.~Huang and L.~Dolecek, ``Analysis of finite-alphabet iterative decoders under
  processing errors,'' in \emph{IEEE International Conference on Acoustics,
  Speech and Signal Processing}, May 2013, pp. 5085--5089.

\bibitem{leduc2012faulty}
F.~Leduc-Primeau and W.~Gross, ``{Faulty Gallager-B decoding with optimal
  message repetition},'' in \emph{50th Annual Allerton Conference on
  Communication, Control, and Computing}, Oct. 2012, pp. 549--556.

\bibitem{huang2013gallager}
C.-H. Huang, Y.~Li, and L.~Dolecek, ``{Gallager B LDPC decoder with transient
  and permanent errors},'' in \emph{IEEE International Symposium on Information
  Theory Proceedings}, July 2013, pp. 3010--3014.

\bibitem{yazdi2013gallager}
S.~Yazdi, H.~Cho, and L.~Dolecek, ``{Gallager B decoder on noisy hardware},''
  \emph{IEEE Transactions on Communications}, vol.~61, no.~5, pp. 1660--1673,
  May 2013.

\bibitem{Balatsoukas14ComL}
A.~Balatsoukas-Stimming and A.~Burg, ``Density evolution for min-sum decoding
  of {LDPC} codes under unreliable message storage,'' \emph{IEEE Communications
  Letters}, vol.~18, no.~5, pp. 849--852, May 2014.

\bibitem{ngassa2013min}
C.~K. Ngassa, V.~Savin, and D.~Declercq, ``Min-{S}um-based decoders running on
  noisy hardware,'' in \emph{IEEE Global Communications Conference}, Dec. 2013,
  pp. 1--10.

\bibitem{Planjery_IEEETransCommun_2013}
S.~Planjery, D.~Declercq, L.~Danjean, and B.~Vasic, ``Finite alphabet iterative
  decoders-part {I}: decoding beyond belief propagation on the binary symmetric
  channel,'' \emph{IEEE Transactions on Communications}, vol.~61, no.~10, pp.
  4033--4045, Oct. 2013.

\bibitem{richardson01IT2}
T.~J. Richardson and R.~Urbanke, ``The capacity of low-density parity-check
  codes under message-passing decoding,'' \emph{IEEE Transactions on
  Information Theory}, vol.~47, no.~2, pp. 599--618, Feb. 2001.

\bibitem{bennatan06IT}
A.~Bennatan and D.~Burshtein, ``{Design and analysis of nonbinary LDPC codes
  for arbitrary discrete-memoryless channels},'' \emph{IEEE Transactions on
  Information Theory}, vol.~52, no.~2, pp. 549--583, 2006.

\bibitem{wang05IT}
C.~Wang, S.~Kulkarni, and H.~Poor, ``{Density evolution for asymmetric
  memoryless channels},'' \emph{IEEE Transactions on Information Theory},
  vol.~51, no.~12, pp. 4216--4236, Dec 2005.

\bibitem{Chen05Com}
J.~Chen, A.~Dholakia, E.~Eleftheriou, M.~Fossorier, and X.-Y. Hu,
  ``{Reduced-complexity decoding of LDPC codes},'' \emph{IEEE Transactions on
  Communications}, vol.~53, no.~8, pp. 1288--1298, 2005.

\bibitem{tanner2004ldpc}
R.~Tanner, D.~Sridhara, A.~Sridharan, T.~Fuja, and D.~Costello, ``{LDPC block
  and convolutional codes based on circulant matrices},'' \emph{IEEE Trans. on
  Inf. Th.}, vol.~50, no.~12, pp. 2966--2984, 2004.

\bibitem{li09IT}
G.~Li, I.~Fair, and W.~Krzymien, ``Density evolution for nonbinary {LDPC} codes
  under {G}aussian approximation,'' \emph{IEEE Transactions on Information
  Theory}, vol.~55, no.~3, pp. 997--1015, 2009.

\end{thebibliography}
\nopagebreak

\end{document}